\newtheorem{lem}{Lemma}
\newtheorem{prop}{Proposition}
\newtheorem{cor}{Corollary}
\newtheorem{rmk}{Remark}
\newtheorem{thm}{Theorem}
\newcommand{\ep}{\varepsilon}
\newcommand{\R}{\mathbb{R}}
\newcommand{\A}{\mathcal{A}}
\newcommand{\la}{\langle}
\newcommand{\ra}{\rangle}
\newcommand{\diag}{\text{diag}}
\newcommand{\what}{\widehat}
\newcommand{\X}{\mathbf{X}}
\newcommand{\G}{\mathbf{G}}
\newcommand{\Y}{\mathbf{Y}}
\newcommand{\N}{\mathbf{N}}
\renewcommand{\u}{\mathbf{u}}
\renewcommand{\v}{\mathbf{v}}
\newcommand{\I}{\mathbf{I}}
\renewcommand{\A}{\mathbf{A}}
\newcommand{\B}{\mathbf{B}}
\newcommand{\s}{\underline{s}}
\newcommand{\umu}{\underline{\mu}}
\newcommand{\unu}{\underline{\nu}}
\newcommand{\bSigma}{\mathbf{\Sigma}}
\begin{document}
\title{Rapid evaluation of the spectral signal \\ detection threshold and Stieltjes transform 
%
%%%for spiked models
%%%Evaluating the boundary and Stieltjes transform of limiting spectral distributions for random matrices with a separable variance profile
}
\author{William Leeb\footnote{School of Mathematics, University of Minnesota, Twin Cities. Minneapolis, MN, USA.}}
\date{}
\maketitle

\abstract{
Accurate detection of signal components is a frequently-encountered challenge in statistical applications with low signal-to-noise ratio. This problem is particularly challenging in settings with heteroscedastic noise. In certain signal-plus-noise models of data, such as the classical spiked covariance model and its variants, there are closed formulas for the spectral signal detection threshold (the largest sample eigenvalue attributable solely to noise) for isotropic noise in the limit of infinitely large data matrices. However, more general noise models currently lack provably fast and accurate methods for numerically evaluating the threshold.

In this work, we introduce a rapid algorithm for evaluating the spectral signal detection threshold in the limit of infinitely large data matrices. We consider noise matrices with a separable variance profile (whose variance matrix is rank one), as these arise often in applications. The solution is based on nested applications of Newton's method. We also devise a new algorithm for evaluating the Stieltjes transform of the spectral distribution at real values exceeding the threshold. The Stieltjes transform on this domain is known to be a key quantity in parameter estimation for spectral denoising methods. The correctness of both algorithms is proven from a detailed analysis of the master equations characterizing the Stieltjes transform, and their performance is demonstrated in numerical experiments.
}

\section{Introduction}

Random matrix theory is an increasingly popular tool for deriving methods in statistical signal processing applications. Recent applications include signal denoising \cite{shabalin2013reconstruction},  \cite{gavish2017optimal},  \cite{leeb2020operator},  \cite{nadakuditi2014optshrink},  \cite{leeb2019optimal}, covariance estimation \cite{donoho2018optimal},  \cite{leeb2019optimal}, filter correction in signal and image processing \cite{bhamre2016denoising},  \cite{dobriban2020optimal}, multiple-input multiple-output (MIMO) wireless communication \cite{couillet2011random},  \cite{biglieri2007mimo}, and factor analysis \cite{dobriban2017deterministic},  \cite{dobriban2017factor}, to name just a few.

Part of the appeal of random matrix theory is that for many classes of random matrices, certain random quantities of statistical significance converge to well-defined limits as the matrix size grows infinitely large. By way of introduction, we illustrate this phenemonon in the \emph{spiked covariance model}, introduced by Johnstone in \cite{johnstone2001distribution}. Here, the user observes $l$ iid random vectors $Y_1,\dots,Y_l$ in $\R^k$ of the form $Y_j = X_j + \ep_j$, where $X_j$ (the signal component) is a random vector with covariance $\bSigma$ of rank $r \ll \min(k,l)$ and eigenvalues $\ell_1 > \dots > \ell_r > 0$, and $\ep_j$ (the noise component) is a zero mean Gaussian vector with covariance $\I_k$. Typically, the user wishes to estimate quantities related to the signal covariance $\bSigma$, such as its rank, eigenvectors, and eigenvalues, from the signal-plus-noise observations $Y_1,\dots,Y_l$.

It is well-known that when $k$ and $l$ are both large, there emerge simple relationships between the sample covariance $\what \bSigma = \frac{1}{l}\sum_{j=1}^{l} (Y_j - \overline Y) (Y_j - \overline Y)^T$ (where $\overline Y = \frac{1}{l}\sum_{j=1}^{l}Y_j$ denotes the sample mean) and the signal covariance $\bSigma$. For example, the largest $r$ eigenvalues of $\what \bSigma$ converge almost surely to the following limits as $k,l \to \infty$ and $k/l \to \gamma > 0$:
\begin{align}
\label{eq:spike}
\lambda_j = 
\begin{cases}
\left(\ell_j + 1 \right) \left(1 + \frac{\gamma}{\ell_j} \right), 
    & \text{ if } \ell_j > \sqrt{\gamma}  \\
(1 + \sqrt{\gamma})^2, & \text{ if } \ell_j \le \sqrt{\gamma}.
\end{cases}.
\end{align}
Furthermore, the inner products $\langle \u_j, \hat \u_m \rangle$ between the top $r$ eigenvectors $\u_1,\dots,\u_r$ of $\bSigma$ and the top $r$ eigenvectors $\hat \u_1,\dots,\hat \u_r$ of $\what \bSigma$ converge almost surely to the following limits:
\begin{align}
c_{jm} = 
\begin{cases}
\sqrt{\frac{\ell_j^2 - \gamma}{\ell^2 + \gamma \ell_j}}
    & \text{ if } j=m \text{ and } \ell_j > \sqrt{\gamma}  \\
0, & \text{ otherwise }
\end{cases}.
\end{align}
Proofs of these relationships may be found in the seminal work \cite{paul2007asymptotics}, while extensions to more general noise models can be found in \cite{benaych2012singular}.  These relationships have been used to devise estimators of $\bSigma$, as in the work \cite{donoho2018optimal}, as well as to predict the signal vectors $X_1,\dots,X_l$ themselves, as in \cite{shabalin2013reconstruction}, \cite{gavish2017optimal}, \cite{leeb2020operator}, \cite{leeb2019matrix}.

Of particular significance for the present work is the value $(1 + \sqrt{\gamma})^2$. This is the limiting value of the operator norm of the sample covariance $\frac{1}{l}\sum_{j=1}^{l} \ep_j \ep_j^T$ of the noise component alone as $k,l \to \infty$ and $k/l \to \gamma$; see \cite{geman1980limit}. More generally, for any sequence of random noise matrices whose operator norms have an almost sure asymptotic limit as the dimensions grow to infinity, we refer to that limiting value as the \emph{spectral signal detection threshold}, or \emph{SSDT} for short. As is apparent for white noise from the formula \eqref{eq:spike}, observed eigenvalues exceeding the SSDT may be attributed to the signal component, and used to extract information about the signal. Eigenvalues below the SSDT are not as useful for recovering information on the signal (though see Remark \ref{rmk:detection} below). In typical applications of the spiked covariance model and its generalizations to signal denoising, only the eigenvalues of $\what \bSigma$ exceeding the SSDT are used for denoising \cite{gavish2017optimal}, \cite{donoho2018optimal}, \cite{leeb2019matrix}, \cite{nadakuditi2014optshrink}, \cite{dobriban2020optimal}, \cite{leeb2019optimal}, \cite{bhamre2016denoising}. For this reason, evaluation of the SSDT is a critical task in statistical signal processing.

%%%As such, the SSDT itself is often used to help estimate the rank $r$ of $\bSigma$; 

While the SSDT and related quantities, such as the relations between the eigenvalues of $\bSigma$ and $\what \bSigma$, have simple formulas in the case of white noise, more general noise models pose greater challenges. In this paper, we study a fairly broad family of noise matrices, namely those with a \emph{separable variance profile}: every entry of the noise matrix has a potentially different variance, but the matrix of all the variances is rank one.
%%%We explain our model in detail in Section \ref{sec-setting}.
This type of random matrix arises naturally in a number of applications, such as signal denoising with variable-strength heteroscedastic noise and the Kronecker model of multiple-input multiple-output transmission. We show that even when there are not closed formulas for the quantities of interest in this model, they may nevertheless be evaluated rapidly and to machine precision via provably correct algorithms.

\subsection{This paper's contributions}

This paper introduces fast, scalable, and numerically stable algorithms for two related problems arising in statistical signal processing applications. The first is to evaluate the spectral signal detection threshold (SSDT) for noise matrices with a separable variance profile.
%%%signal-plus-noise random matrix models with.
As explained above, the SSDT is the asymptotic value of the operator norm of a random matrix representing the noise component in a signal-plus-noise observation model as the dimensions of the matrix grow to infinity.
%%%While the SSDT has a simple formula in the case of isotropic noise, its characterization in the heteroscedastic setting is not straightforward and is the primary subject of the present paper.

The second problem we address is evaluating the Stieltjes transform (also known as the Cauchy-Stieltjes transform) of a certain probability measure associated with the noise matrix, known as the limiting spectral distribution (LSD). The Stieltjes transform is a function of central interest in random matrix theory and its applications; we recall its precise definition in Section \ref{sec-setting}. As we will review in Section \ref{sec-dtransform}, the Stieltjes transform, evaluated at real values exceeding the SSDT, may be used to estimate certain model parameters and perform signal denoising. Like the SSDT itself, in isotropic noise the Stieltjes transform has a closed form, while more general variance profiles pose greater challenges which we address in the current paper.

%%%The family of noise matrices we consider are those with a \emph{separable variance profile}: every entry of the noise matrix has a potentially different variance, but the matrix of all the variances is rank one. This type of random matrix arises naturally in a number of applications, such as signal denoising with variable-strength heteroscedastic noise and the Kronecker model of multiple-input multiple-output transmission. We explain our model in detail in Section \ref{sec-setting}.

The algorithms we present scale linearly with the number of problem parameters, and are provably accurate to machine precision. The solution and analysis of each problem makes use of the master equations characterizing the Stieltjes transform, which will be reviewed in Section \ref{sec-stieltjes}. Through a detailed analysis, presented in Section \ref{sec-math}, we prove that the desired values may be computed by the use of Newton's root-finding algorithm. For the SSDT, several intermediate quantities used in each iteration of Newton's method are themselves computable by Newton's method, as we will show. Consequently, all parameters we compute are either available analytically, or can be provably computed to full precision by a rapidly converging iterative scheme. For matrices of size $p$-by-$n$, the asymptotic costs of the algorithms scale like $O(p+n)$.

\subsection{Relation to prior work}

The signal-plus-noise matrix models of the kind we consider have been previously studied in the statistical literature, in works such as \cite{johnstone2001distribution},  \cite{benaych2012singular},  \cite{nadakuditi2014optshrink},  \cite{dobriban2020optimal},  \cite{leeb2019optimal},  \cite{gavish2017optimal},  \cite{paul2007asymptotics},  \cite{leeb2019matrix},  \cite{donoho2018optimal},  \cite{hong2018optimally},  \cite{hong2016towards},  \cite{zhang2018heteroskedastic},  \cite{hong2018asymptotic}.  Estimating the number of signal terms in such principal components analysis and factor models is a well-studied problem in statistics and statistical signal processing applications \cite{dobriban2017deterministic},  \cite{dobriban2017factor},  \cite{buja1992remarks},  \cite{horn1965rationale},  \cite{kritchman2008determining},  \cite{passemier2012determining},  \cite{kritchman2009non}. Detection in the low SNR regime constitutes a distinct though conceptually relevant class of problems in signal processing \cite{tandra2008snr},  \cite{yucek2009survey},  \cite{zeng2007covariance},  \cite{zeng2009eigenvalue}.  Noise matrices with separable variance profiles have also been studied in the context of the Kronecker product model for MIMO wireless communication \cite{couillet2011random},  \cite{biglieri2007mimo} and factor analysis with applications to economics \cite{onatski2010determining}. We also note that such random matrices are a special case of ``algebraic'' random matrices, introduced in \cite{rao2008polynomial}.
%%% and denoising problems with heteroscedastic noise and missing data \cite{leeb2019matrix}.

The solution to both of the problems we study rests on a known characterization of the Stieltjes transform of the LSD as the solution to a set of certain non-linear equations; these have appeared in \cite{paul2009no} and  \cite{couillet2014analysis}. We present a new, detailed analysis of these equations. As a consequence of our analysis, we show that evaluating the Stieltjes transform may be done by a straightforward application of Newton's root-finding algorithm, whereas the SSDT is computable by several nested applications of Newton's agorithm, appropriately initialized.

Previous works have considered the problem of evaluating the LSD, its Stieltjes transform, and the boundary of its support, for different classes of random noise matrices. The paper \cite{dobriban2015efficient} proposes a scheme for evaluating the Stieltjes transform at complex values outside the support of the LSD; this method is based on a non-linear equation satisfied by the Stieltjes transform \cite{silverstein1995strong},  \cite{silverstein1995empirical},  \cite{marchenko1967distribution}. The method in \cite{cordero2018mixandmix} is devoted to extending the approach to mixture models. The papers \cite{ledoit2015spectrum},  \cite{ledoit2017numerical} are concerned with solving the inverse problem of recovering the population distribution from the observed spectrum; this problem is also taken up in \cite{elkaroui2008spectrum}. The paper \cite{dobriban2017deterministic} contains a method for finding the boundary of a certain family of LSDs based on root-finding, although the use of Newton's method is not employed or analyzed. To the best of our knowledge, the problem of evaluating the SSDT and Stieltjes transform for random matrices with a separable variance profile has not been addressed in prior work; and for special cases that have been studied, such as in \cite{dobriban2017deterministic}, the prior work does not contain fast algorithms with the guarantees presented here.

%%%The relationship between the singular value decompositions of the signal matrix and the observed matrix can be quantified by the Stieltjes transform of the LSD of the noise matrix, evaluated at positive numbers outside the LSD's support. Furthermore, identifying the boundary of the LSD is useful for separating the noise and signal components in such a model. We note too that random matrices with separable variance profiles arise naturally in wireless communications \cite{couillet2011random} and have been subject to extensive theoretical analysis \cite{couillet2014analysis},  \cite{hachem2008CLT},  \cite{hachem2007deterministic}.

\subsection{Outline}

The remainder of the paper is outlined as follows. In Section \ref{sec-prelim}, we precisely state the problems we solve in this paper, and review the mathematical and numerical material that we will be using. In Section \ref{sec-math}, we derive the core mathematical theory on which our algorithms rest, namely a detailed analysis of the master equations characterizing the Stieltjes transform of the LSD. In Section \ref{sec-algorithms}, we provide explicit descriptions of the numerical algorithms for finding the SSDT and evaluating the Stieltjes transform. In Section \ref{sec-numerical}, we present the results of several illustrative numerical experiments demonstrating the performance of our algorithms.

\section{Preliminaries}
\label{sec-prelim}

\subsection{Setting and problem formulation}
\label{sec-setting}

We suppose we have a $k$-by-$l$ random matrix of the form $\N = \A^{1/2} \G \B^{1/2}$, where $\G$ is a random matrix with iid entries of mean zero and variance $l^{-1}$, and $\A$ and $\B$ are positive-definite matrices of sizes $k$-by-$k$ and $l$-by-$l$, respectively. We define the \emph{empirical spectral distribution (ESD)} $\mu_k$ to be the distribution of eigenvalues $\lambda_1,\dots,\lambda_k$ of the matrix $\N \N^T$:
\begin{align}
d \mu_k(t) = \frac{1}{k} \sum_{i=1}^{k} \delta_{\lambda_i}(t).
\end{align}

As is typical in high-dimensional problems, we work in the asymptotic regime where $k$ and $l$ both grow to infinity. More precisely, we suppose that $l = l(k)$ grows with $k$, and that the limit
\begin{align}
\gamma = \lim_{k \to \infty} \frac{k}{l(k)}
\end{align}
is well-defined, positive, and finite. Furthermore, as $k$ and $l$ grow, we suppose that the eigenvalue spectra of $\A$ and $\B$ have well-defined asymptotic distributions $\nu$ and $\unu$, respectively (in terms of weak convergence), with compact support. Under suitable conditions on $\A$ and $\B$, the sequence of measures $\mu_k$ will almost surely converge weakly to a compactly supported measure $\mu$, called the \emph{limiting spectral distribution (LSD)}. We also denote by  $\umu$ the limiting spectral distribution of the eigenvalues of $\N^T \N$.

We define the \emph{spectral signal detection threshold (SSDT)} as follows:
\begin{align}
\lambda^* = \operatorname*{arg\,max}\{ \lambda > 0 : \mu([\lambda,\infty)) > 0 \}.
\end{align}
This is the right endpoint of the LSD $\mu$'s support, or equivalently the asymptotic operator norm of the matrix $\N \N^T$. As we will explain in Section \ref{sec-dtransform}, in a signal-plus-noise model $\Y = \X + \N$ where $\X$ is a low-rank signal matrix, eigenvalues exceeding $\lambda^*$ are attributable to the signal $\X$.

%%% whereas eigenvalues below $\lambda^*$ are explainable by the noise alone.

Next, we define the \emph{Stieltjes transform} of $\mu$:
\begin{align}
\label{eq-s}
s(\lambda) = \int_{\R} \frac{1}{t - \lambda} d\mu(t),
\end{align}
which has derivative equal to
\begin{align}
\label{eq-sder}
s'(\lambda) = \int_{\R} \frac{1}{(t - \lambda)^2} d\mu(t).
\end{align}
Similarly, the Stieltjes transform of $\umu$ is given by
\begin{align}
\label{eq-sbar}
\s(\lambda) = \int_{\R} \frac{1}{t -\lambda} d\umu(t),
\end{align}
with derivative equal to
\begin{align}
\label{eq-sbarder}
\s'(\lambda) = \int_{\R} \frac{1}{(t - \lambda)^2} d\umu(t).
\end{align}
We call $\s(\lambda)$ the \emph{associated Stieltjes transform} of $\mu$. The functions $s$ and $\s$ are defined for all complex $\lambda$ outside the supports of $\mu$ and $\umu$, respectively. However, as we will explain in Section \ref{sec-dtransform}, in the present work we are only interested in evaluating $s(\lambda)$, $\s(\lambda)$, and their derivatives for real $\lambda > \lambda^*$.
%%% We will review some properties of the Stieltjes transform in Section \ref{sec-stieltjes}.

In this paper, we consider the setting where $\nu$ and $\unu$ are discrete distributions of the form
\begin{align}
d \nu =\sum_{i=1}^{p} \omega_i \delta_{a_i}
\end{align}
and
\begin{align}
d \unu = \sum_{j=1}^{n} \pi_j \delta_{b_j}.
\end{align}
Here, $a_1,\dots,a_p$ and $b_1,\dots,b_n$ are positive numbers, and the weights $\omega_i$ and $\pi_j$ satisfy $\sum_{i=1}^{p} \omega_i = \sum_{j=1}^{n} \pi_j = 1$, $\omega_i > 0$, and $\pi_j > 0$. For example, if $k/2$ eigenvalues of $\A$ are 1, and the remaining $k/2$ are equal to 2, then $p=2$, $a_1 = 1$, $a_2=2$, and $\omega_1 = \omega_2 = 1/2$.

With this background and notation, we can concisely state the two problems we solve in this paper:

%%%\begin{minipage}

\vspace{\baselineskip}

\noindent
\textbf{Problem 1.}
Given $a_1,\dots,a_p$ with corresponding weights $\omega_1,\dots,\omega_p$; and $b_1,\dots,b_n$ with corresponding weights $\pi_1,\dots,\pi_n$; evaluate the SSDT $\lambda^*$.

\vspace{\baselineskip}

\noindent
\textbf{Problem 2.}
Given $a_1,\dots,a_p$ with corresponding weights $\omega_1,\dots,\omega_p$; and $b_1,\dots,b_n$ with corresponding weights $\pi_1,\dots,\pi_n$; and a value $\lambda > \lambda^*$; evaluate $s(\lambda)$, $s'(\lambda)$, $\s(\lambda)$, and $\s'(\lambda)$.

\vspace{\baselineskip}

Algorithms solving Problems 1 and 2 to machine precision, with asymptotic cost $O(p + n)$, are presented in Section \ref{sec-algorithms}.

\begin{rmk}
In the problem statements, the parameters $p$, $n$, $a_1,\dots,a_p$, and $b_1,\dots,b_n$ are user-supplied inputs. In many statistical applications, these parameters are assumed to be known, as in the works \cite{dobriban2017deterministic}, \cite{leeb2019optimal}, \cite{dobriban2015efficient}, \cite{hong2018optimally}, \cite{hong2016towards},  \cite{hong2018asymptotic}.
\end{rmk}

\begin{rmk}
The assumption that $d\nu$ and $d\unu$ are discrete measures is quite common in the literature on random matrix theory and its applications \cite{dobriban2017deterministic}, \cite{leeb2019optimal}, \cite{dobriban2015efficient}, \cite{hong2018optimally},  \cite{hong2016towards},  \cite{hong2018asymptotic}. It is likely that the methods in the present work can be generalized to more general measures, so long as certain linear functionals of the measures can be efficiently computed (e.g.\ by numerical quadrature). Pursuing this in detail lies outside the scope of the current work, however.
\end{rmk}

%%%\end{minipage}

%%%In this section, we will review known results from random matrix theory and optimization that we will make use of throughout this paper. Throughout the exposition, we suppose we have a random matrix $\G$ of size $k$-by-$l$, with iid entries of mean zero and variance $l^{-1}$. For simplicity, the reader may assume the entries are Gaussian, though this assumption can be relaxed. We consider the matrix $\N = \A^{1/2} \G \B^{1/2}$, where $\A$ and $\B$ are positive-definite matrices. 
%

%

\subsection{Properties of the Stieltjes transform}
\label{sec-stieltjes}
The Stieltjes and associated Stieltjes transforms of the LSD $\mu$ are defined by equations \eqref{eq-s} and \eqref{eq-sbar}, respectively. We refer the reader to the standard references \cite{bai2009spectral},  \cite{tao2012topics} for a detailed treatment of Stieltjes transforms of probability measures, particularly random spectral measures.

\begin{lem}
\label{lem-s2sbar}
The Stieltjes transforms $s(\lambda)$ and $\s(\lambda)$ satisfy the following relations:
\begin{align}
\label{eq-s2sbar}
\s(\lambda) = \gamma s(\lambda) + \frac{\gamma-1}{\lambda}
\end{align}
and
\begin{align}
\label{eq-sbar2s}
s(\lambda) = \frac{1}{\gamma}  \s(\lambda)
    + \bigg(\frac{1}{\gamma} - 1\bigg) \frac{1}{\lambda}.
\end{align}
\end{lem}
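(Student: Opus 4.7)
The plan is to derive both identities from the elementary fact that $\N\N^T$ and $\N^T\N$ share exactly the same \emph{nonzero} eigenvalues with matching multiplicities, differing only in how many zero eigenvalues they carry. Since $\N$ is $k$-by-$l$, the $l$-by-$l$ matrix $\N^T\N$ has the same $\min(k,l)$ nonzero eigenvalues as the $k$-by-$k$ matrix $\N\N^T$, padded with $|k-l|$ extra zeros in whichever spectrum is larger. This gives a direct algebraic relationship between the two empirical spectral distributions at finite $k,l$.

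Concretely, assuming first $k \le l$, every eigenvalue of $\N\N^T$ appears in the spectrum of $\N^T\N$ with the same multiplicity, and $\N^T\N$ has an additional $l-k$ zero eigenvalues. Dividing by $l$ yields
\begin{align}
\umu_l \;=\; \frac{k}{l}\,\mu_k \;+\; \frac{l-k}{l}\,\delta_0.
\end{align}
The case $k > l$ is symmetric, with the roles of the two matrices swapped. Passing to the limit $k/l \to \gamma$ and using weak convergence of $\mu_k \to \mu$ and $\umu_l \to \umu$, one obtains (for $\gamma \le 1$)
\begin{align}
\umu \;=\; \gamma \,\mu \;+\; (1-\gamma)\,\delta_0,
\end{align}
and the analogous identity with $\mu$ and $\umu$ exchanged when $\gamma > 1$; both cases lead to the same Stieltjes-level formula below.

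Applying the definition \eqref{eq-sbar} and linearity of the integral in the measure, we get
\begin{align}
\s(\lambda) \;=\; \int \frac{1}{t-\lambda}\,d\umu(t) \;=\; \gamma \int \frac{1}{t-\lambda}\,d\mu(t) \;+\; (1-\gamma)\cdot\frac{1}{0-\lambda} \;=\; \gamma\, s(\lambda) \;+\; \frac{\gamma-1}{\lambda},
\end{align}
which is \eqref{eq-s2sbar}. Solving this linear relation for $s(\lambda)$ then yields \eqref{eq-sbar2s} immediately.

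I do not anticipate a real obstacle here: the argument is essentially bookkeeping on the shared nonzero spectrum of $\N\N^T$ and $\N^T\N$. The only mild subtlety is that the $\delta_0$ mass in one of the two limiting measures is only present when $\gamma \ne 1$, and one should verify that both cases $\gamma \le 1$ and $\gamma > 1$ produce the same identity; this is a short check that reduces to the same algebra with the signs of $1-\gamma$ reorganized.
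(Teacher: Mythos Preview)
Your proposal is correct and follows essentially the same approach as the paper: the paper simply asserts the measure identity $d\umu(t) = \gamma\, d\mu(t) + (1-\gamma)\,\delta_0(t)$ and notes that Lemma~\ref{lem-s2sbar} follows immediately from it. You additionally justify that measure identity via the shared nonzero spectrum of $\N\N^T$ and $\N^T\N$, which is more detail than the paper provides but is exactly the standard argument.
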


\begin{lem}
\label{lem-s2sbar-der}
The derivatives $s'(\lambda)$ and $\s'(\lambda)$ satisfy the following relations:
\begin{align}
\label{eq-s2sbar-der}
\s'(\lambda) = \gamma s'(\lambda) + \frac{1-\gamma}{\lambda^2}
\end{align}
and
\begin{align}
\label{eq-sbar2s-der}
s'(\lambda) = \frac{1}{\gamma}  \s'(\lambda)
    + \bigg(1 - \frac{1}{\gamma}\bigg) \frac{1}{\lambda^2}.
\end{align}
\end{lem}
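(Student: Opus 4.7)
The plan is essentially immediate: differentiate the two identities of Lemma \ref{lem-s2sbar} term by term with respect to $\lambda$. Since $s(\lambda)$ and $\s(\lambda)$ are defined as integrals of $1/(t-\lambda)$ against compactly supported measures $\mu$ and $\umu$, for $\lambda$ outside the supports they are real-analytic, and the derivatives $s'(\lambda)$ and $\s'(\lambda)$ given by \eqref{eq-sder} and \eqref{eq-sbarder} are indeed obtained by differentiation under the integral sign (standard, since the integrand $(t-\lambda)^{-1}$ is smooth in $\lambda$ uniformly on the compact support).

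First I would take \eqref{eq-s2sbar}, namely $\s(\lambda) = \gamma s(\lambda) + (\gamma-1)/\lambda$, and differentiate both sides in $\lambda$. The left side yields $\s'(\lambda)$, the first term on the right yields $\gamma s'(\lambda)$, and the last term yields $-(\gamma-1)/\lambda^2 = (1-\gamma)/\lambda^2$, giving \eqref{eq-s2sbar-der}. Then I would repeat for \eqref{eq-sbar2s}, differentiating $s(\lambda) = \gamma^{-1}\s(\lambda) + (\gamma^{-1}-1)/\lambda$ to obtain \eqref{eq-sbar2s-der}; alternatively, \eqref{eq-sbar2s-der} can be derived algebraically from \eqref{eq-s2sbar-der} by solving for $s'(\lambda)$.

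There is essentially no obstacle here, since Lemma \ref{lem-s2sbar} is already assumed, and the calculation reduces to differentiating $1/\lambda$. The only point worth a brief remark is justifying that differentiation commutes with the integrals defining $s$ and $\s$; this is standard given the compact support of $\mu$ and $\umu$ and is why \eqref{eq-sder} and \eqref{eq-sbarder} hold.
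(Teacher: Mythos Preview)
Your proposal is correct and matches the paper's approach: the paper simply states that Lemma~\ref{lem-s2sbar-der} follows from Lemma~\ref{lem-s2sbar}, which is exactly the termwise differentiation you describe.
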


Lemma \ref{lem-s2sbar-der} follows from Lemma \ref{lem-s2sbar}, which in turn follows immediately from the relation:
\begin{align}
d \umu(t) = \gamma d\mu(t) + (1-\gamma) \delta_0(t).
\end{align}

\begin{rmk}
Lemmas \ref{lem-s2sbar} and \ref{lem-s2sbar-der} show that $\s(\lambda)$ and $\s'(\lambda)$ may be easily evaluated once $s(\lambda)$ and $s'(\lambda)$ are computed.
\end{rmk}

%

%%%\subsection{The master equations for $s(\lambda)$}
%%%\label{sec-master}

%%%In the limit where the distribution of $a_1,\dots,a_p$ converges to $\nu$ and the distribution of $b_1,\dots,b_n$ converges to $\unu$, and $p/n \to \gamma > 0$, 

The next result is central to our subsequent analysis.

\begin{thm}
\label{thm-master}
The Stieltjes transform $s(\lambda)$ for the LSD satisfies the following \emph{master equations}:
\begin{align}
s(\lambda) = \int_{\R} \frac{1}{a G(e(\lambda)) - \lambda} d\nu (a),
\end{align}
where 
\begin{align}
G(e) = \int_{\R} \frac{b}{1 + \gamma b e} d \unu(b)
\end{align}
and $e(\lambda)$ is a function that satisfies the equation
\begin{align}
e(\lambda) = \int_{\R} \frac{a}{a G(e(\lambda)) - \lambda} d\nu (a).
\end{align}
\end{thm}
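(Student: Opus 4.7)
The plan is to derive the master equations from a resolvent analysis of $\N\N^T$, viewed as a sample covariance matrix with a separable variance profile. Since $s(\lambda) = \lim_{k\to\infty} k^{-1}\tr{(\N\N^T - \lambda \I)^{-1}}$, I would identify a deterministic equivalent of the resolvent whose diagonal entries take an explicit form in terms of the eigenvalues of $\A$, and then read off the master equations by trace-averaging.

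First, I would reduce to the case where $\A$ and $\B$ are diagonal: since $\G$ has iid entries, a standard unitary-invariance or universality argument shows that the LSD of $\N\N^T$ depends on $\A$ and $\B$ only through their spectra, so we may assume $\A = \diag(a_1,\dots,a_k)$ and $\B = \diag(b_1,\dots,b_l)$ with empirical distributions converging to $\nu$ and $\unu$ respectively.

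Second, I would extract the diagonal entries of $\M(\lambda) := (\N\N^T - \lambda \I)^{-1}$ via a Schur complement / leave-one-row-out argument. Writing $\mathbf{g}_i^T$ for the $i$-th row of $\G$, the block inversion formula yields an exact expression for $M_{ii}(\lambda)$ in which the random quantity $a_i\, \mathbf{g}_i^T \B^{1/2}(\B^{1/2}\G_{(i)}^T\A_{(i)}\G_{(i)}\B^{1/2} - \lambda \I)^{-1}\B^{1/2} \mathbf{g}_i$ appears in the denominator, with subscript $(i)$ denoting removal of the $i$-th row/column. Classical concentration of quadratic forms, combined with a rank-one perturbation bound that lets one replace $\G_{(i)}$ by $\G$, yields
\begin{align*}
M_{ii}(\lambda) \approx \frac{1}{a_i G_k(\lambda) - \lambda}, \qquad G_k(\lambda) := \frac{1}{l}\tr{\B(\I_l + \gamma_k \B\, e_k(\lambda))^{-1}},
\end{align*}
where $e_k(\lambda) := k^{-1}\tr{\A\M(\lambda)}$ and $\gamma_k := k/l$. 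The specific form of $G_k$ is obtained by rewriting the inner inverse via the companion-matrix identity linking the resolvents of $\N\N^T$ and $\N^T\N$. Passing to the limit, weak convergence of the spectra of $\A$ and $\B$ together with the concentration bounds above yield $G_k(\lambda) \to G(e(\lambda))$ and $e_k(\lambda) \to e(\lambda)$ satisfying $e(\lambda) = \int_\R a\,(a G(e(\lambda)) - \lambda)^{-1}\, d\nu(a)$. Averaging $M_{ii}(\lambda)$ uniformly over $i$ against $\nu$ then delivers the first master equation.

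The main obstacle is twofold: one must (i) control the quadratic-form approximation $M_{ii}(\lambda) \approx (a_i G_k(\lambda) - \lambda)^{-1}$ quantitatively and uniformly in $i$, so that averaging preserves the error, and (ii) establish well-posedness of the coupled fixed-point system for $e(\lambda)$ on the domain $\lambda > \lambda^*$. The first step requires sharp concentration of $\mathbf{g}_i^T \mathbf{Y} \mathbf{g}_i$ for deterministic $\mathbf{Y}$ plus stability of the associated fixed-point map; the second follows from monotonicity of $e \mapsto \int a/(aG(e) - \lambda)\, d\nu(a)$ in its natural real domain, combined with analytic continuation from $\CC^+$. Ultimately, one can appeal to the separable-variance-profile results of Paul-Silverstein \cite{paul2009no} and Couillet-Hachem-Najim \cite{couillet2014analysis}, whose theorems specialize to the statement above in the present notation.
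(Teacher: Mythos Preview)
The paper does not supply its own proof of this theorem; it simply states the result and refers the reader to \cite{paul2009no} (and to \cite{couillet2014analysis} for a closely related formulation and smoothness of $e(\lambda)$). Your proposal is therefore more than the paper provides: you sketch the standard resolvent/leave-one-out derivation and then, in your final sentence, invoke exactly the same two references. Since the paper's ``proof'' is precisely that citation, your approach is consistent with and subsumes it.
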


For a proof of Theorem \ref{thm-master}, see, for instance, the paper \cite{paul2009no}. The paper \cite{couillet2014analysis} presents a slightly modified form of these equations, with a detailed analysis showing that $e(\lambda)$ is smooth for real $\lambda$ outside the support of $\mu$.

We assume that $\nu$ and $\unu$ are discrete measures of the form
\begin{align}
d \nu =\sum_{i=1}^{p} \omega_i \delta_{a_i},
\end{align}
and
\begin{align}
d \unu = \sum_{j=1}^{n} \pi_j \delta_{b_j},
\end{align}
where $\sum_{i=1}^{p} \omega_i = \sum_{j=1}^{n} \pi_j = 1$, $\omega_i > 0$, and $\pi_j > 0$.
The master equations therefore become:
\begin{align}
\label{eq-e2s}
s(\lambda) = \sum_{i=1}^{p} \frac{\omega_i}{a_i G(e(\lambda)) - \lambda}
\end{align}
where $e(\lambda)$ is a function satisfying
\begin{align}
\label{eq-implicit-e}
e(\lambda) = \sum_{i=1}^{p} \frac{a_i \omega_i }{a_i G(e(\lambda)) - \lambda}
\end{align}
and the function $G$ is defined by:
\begin{align}
\label{eq-gdef}
G(e) = \sum_{j=1}^{n} \frac{b_j \pi_j}{1 + \gamma b_j e}.
\end{align}

We note that in the case where $\B= \I_n$ (the singly-weighted case), the master equations take on a simpler form; see, for instance, \cite{silverstein1995empirical},  \cite{silverstein1995strong}.

\subsection{Spiked models and the $D$-transform}
\label{sec-dtransform}

In a spiked matrix model, we observe a signal-plus-noise matrix $\Y$ of the form
\begin{align}
\Y = \X + \N,
\end{align}
where $\X = \sum_{m=1}^{r} \theta_m \u_m \v_m^T$ is a rank $r \ll \min\{k,l\}$ signal matrix, and $\N$ is a noise matrix. The $D$-transform is defined as follows:
\begin{align}
D(\lambda) =  \lambda s(\lambda) \s(\lambda),
\end{align}
where $s(\lambda)$ and $\s(\lambda)$ are, respectively, the Stieltjes transform and associated Stieltjes transform of the LSD of $\N \N^T$.  The $D$-transform was introduced in \cite{benaych2012singular}, though as a function of $\sqrt{\lambda}$ rather than $\lambda$.

Denoting by $\lambda_1,\dots,\lambda_r$ the top $r$ eigenvalues of $\Y \Y^T$, and $\hat \u_1,\dots, \hat \u_r$, $\hat \v_1,\dots, \hat \v_r$ the corresponding left and right singular vectors of $\Y$, respectively, it is shown in \cite{benaych2012singular} that the $D$-transform defines a mapping between $\lambda_1,\dots,\lambda_r$ and the eigenvalues $\theta_1^2,\dots,\theta_r^2$ of the signal matrix $\X \X^T$, which holds almost surely in the limit $k,l \to \infty$:
\begin{align}
\theta_m^2 = \lim_{k,l \to \infty}\frac{1}{D(\lambda_m)}.
\end{align}
This is satisfied for sufficiently large eigenvalues $\theta_m^2$ of $\X \X^T$, namely those for which $\theta_m^2 > 1/D(\lambda^*)$. Phrased differently, an observed eigenvalue $\lambda_m$ of $\Y \Y^T$ satisfying $\lambda_m > \lambda^*$ may be attributed to the presence of signal.

%%%, whereas if $\lambda_m \le \lambda^*$ the eigenvalue is consistent with pure noise.

Furthermore, the asymptotic cosines $\la \u_m, \hat \u_m \ra$ and $\la \v_m, \hat \v_m \ra$, $1 \le m \le r$, can also be evaluated using the Stieltjes transform. It is shown that, almost surely,
\begin{align}
\lim_{k,l \to \infty} |\la \u_m, \hat \u_m \ra|^2 
    = \frac{s(\lambda_m) D(\lambda_m)}{D'(\lambda_m)}
\end{align}
and
\begin{align}
\lim_{k,l \to \infty} |\la \v_m, \hat \v_m \ra|^2 
    = \frac{\s(\lambda_m) D(\lambda_m)}{D'(\lambda_m)}
\end{align}
Consequently, evaluating $s(\lambda_m)$ and $s'(\lambda_m)$ provides a method for estimating the angles between the population singular vectors $\u_m$, $\v_m$ and the observed singular vectors  $\hat \u_m$, $\hat \v_m$. These relationships have been employed to derive optimal methods of singular value shrinkage \cite{nadakuditi2014optshrink}.

\begin{rmk}
\label{rmk:detection}
While eigenvalues of $\Y \Y^T$ exceeding the SSDT $\lambda^*$ indicate the presence of signal, the converse statement -- if all eigenvalues are less than $\lambda^*$, then there is no signal -- is subtler. For example, the work \cite{onatski2013asymptotic} studies the detection of signals in white noise and shows that the joint distribution of all eigenvalues of $\Y \Y^T$ changes even when the signal eigenvalues do not exceed the SSDT. However, the resulting statistical test does not accurately distinguish between the presence or absence of signal with overwhelming probability. Regardless, in tasks such as covariance and signal estimation in the spiked model, only those eigenvalues exceeding the SSDT are typically used for covariance estimation and signal denoising \cite{gavish2017optimal}, \cite{donoho2018optimal}, \cite{leeb2019matrix}, \cite{nadakuditi2014optshrink}, \cite{dobriban2020optimal}, \cite{leeb2019optimal}, \cite{bhamre2016denoising}.
%%%One may therefore say that signal matrices with eigenvalues below the SSDT cannot be \emph{reliably} detected.
\end{rmk}

\begin{rmk}
Separable variance profiles arise naturally when the columns of $\Y$ are independent random vectors in $\R^p$ of the form
\begin{align}
Y_j = X_j + b_j^{1/2} \A^{1/2} G_j,
\quad 1 \le j \le n,
\end{align}
where the $X_j$ are signal vectors constrained to an $r$-dimensional subspace of $\R^p$; $\A$ is a positive definite matrix; and $b_1,\dots,b_n$ are specified weights. In this model, each signal vector $X_j$ is observed in the presence of heteroscedastic noise (with covariance $\A$) of variable strength $b_j$. The noise matrix alone is distributed like $\N = \A^{1/2} \G \B^{1/2}$, where $\B = \diag(b_1,\dots,b_n)$. Models like these are considered in, for example, \cite{leeb2019matrix}, \cite{ding2021spiked}. When $\A = \I_p$, this model is considered in \cite{hong2018optimally},  \cite{hong2016towards},  \cite{hong2018asymptotic}; when $\B = \I_n$, this model is considered in \cite{zhang2018heteroskedastic}, \cite{leeb2019optimal}.
\end{rmk}

\begin{prop}
\label{prop-dtransform}
For $\lambda > \lambda^*$, $D(\lambda)$ is decreasing and convex.
\end{prop}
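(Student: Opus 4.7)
The plan is to express $D(\lambda)$ as a double integral against the positive product measure $\mu \otimes \umu$, and then inherit monotonicity and convexity from the integrand pointwise. Since $d\umu = \gamma \, d\mu + (1-\gamma)\delta_0$ (or the analogous identity when $\gamma > 1$), both $\mu$ and $\umu$ are supported in $[0,\lambda^*]$. Hence for $\lambda > \lambda^*$ every $t \in \supp(\mu)$ and every $u \in \supp(\umu)$ satisfies $\lambda - t > 0$ and $\lambda - u > 0$, and the two sign changes in the denominators cancel to give
\begin{align*}
D(\lambda) \;=\; \lambda \, s(\lambda) \, \s(\lambda) \;=\; \iint \frac{\lambda}{(\lambda - t)(\lambda - u)} \, d\mu(t) \, d\umu(u).
\end{align*}

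The heart of the argument is to show that, for each fixed $t, u \ge 0$, the kernel $h_{t,u}(\lambda) = \lambda / [(\lambda - t)(\lambda - u)]$ is positive, decreasing, and convex on $(\max\{t,u\}, \infty)$. Positivity is immediate. For monotonicity, logarithmic differentiation yields
\begin{align*}
\frac{h_{t,u}'(\lambda)}{h_{t,u}(\lambda)} \;=\; \frac{1}{\lambda} - \frac{1}{\lambda - t} - \frac{1}{\lambda - u},
\end{align*}
which is strictly negative because $t, u \ge 0$ forces $1/(\lambda - t)$ and $1/(\lambda - u)$ to be at least $1/\lambda$. For convexity, differentiating once more gives
\begin{align*}
(\log h_{t,u})''(\lambda) \;=\; -\frac{1}{\lambda^2} + \frac{1}{(\lambda - t)^2} + \frac{1}{(\lambda - u)^2},
\end{align*}
which is strictly positive for the same elementary reason, namely $(\lambda - t)^{-2} \ge \lambda^{-2}$ whenever $t \ge 0$. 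The identity $h_{t,u}''/h_{t,u} = (\log h_{t,u})'' + ((\log h_{t,u})')^2$ then forces $h_{t,u}'' > 0$.

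Once the kernel is known to be decreasing and convex in $\lambda$, integrating against the positive finite product measure $d\mu \otimes d\umu$ preserves both properties, and differentiation under the integral sign is legitimate because the kernel and its first two $\lambda$-derivatives are bounded uniformly in $(t, u) \in [0,\lambda^*]^2$ on any compact subinterval of $(\lambda^*, \infty)$. The main obstacle is the convexity of the kernel: a direct quotient-rule computation of $h_{t,u}''$ is unpleasant, so the key idea is the logarithmic-differentiation shortcut combined with the elementary inequality $(\lambda - t)^{-2} \ge \lambda^{-2}$ for $t \ge 0$, which is ultimately what drives both monotonicity and convexity of $D$.
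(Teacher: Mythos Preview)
Your proof is correct, but it follows a genuinely different route from the paper. The paper factors $D(\lambda) = \varphi(\lambda)\,\underline{\varphi}(\lambda)$ with $\varphi(\lambda) = \sqrt{\lambda}\,s(\lambda)$ and $\underline{\varphi}(\lambda) = \sqrt{\lambda}\,\s(\lambda)$, and then verifies by direct differentiation of the single-variable kernel $\sqrt{\lambda}/(t-\lambda)$ that each factor is negative, increasing, and concave; the product rule then gives $D' < 0$ and $D'' > 0$. You instead pass to the double integral $D(\lambda) = \iint \lambda/[(\lambda-t)(\lambda-u)]\,d\mu(t)\,d\umu(u)$ and use logarithmic differentiation on the two-variable kernel, exploiting the identity $h''/h = (\log h)'' + ((\log h)')^2$ together with the elementary inequality $(\lambda - t)^{-2} \ge \lambda^{-2}$ for $t \ge 0$. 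Your argument avoids the $\sqrt{\lambda}$ factorization entirely and makes the convexity step especially clean, whereas the paper's approach requires checking the sign of the numerator $6\lambda t + 3\lambda^2 - t^2$ on $[0,\lambda^*]$. Either way the essential input is that $\mu$ and $\umu$ are supported in $[0,\lambda^*]$, which both proofs use in the same way.
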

\begin{proof}
Write $D(\lambda) = \varphi(\lambda) \underline{\varphi}(\lambda)$, where $\varphi(\lambda) = \sqrt{\lambda} s(\lambda)$ and $\underline{\varphi}(\lambda) = \sqrt{\lambda} \s(\lambda)$. Using $D'(\lambda) = \varphi(\lambda) \underline{\varphi}'(\lambda) + \varphi'(\lambda) \underline{\varphi}(\lambda)$ and $D''(\lambda) = \varphi(\lambda) \underline{\varphi}''(\lambda) + \varphi''(\lambda) \underline{\varphi}(\lambda) + 2 \varphi'(\lambda) \underline{\varphi}'(\lambda)$, and $\varphi(\lambda) < 0$ and $\underline{\varphi}(\lambda) < 0$ for $\lambda > \lambda^*$, it is enough to show that $\varphi(\lambda)$ $\underline{\varphi}(\lambda)$ are increasing and concave. But this follows immediately from the identities
\begin{align}
\frac{d}{d\lambda} \frac{\sqrt{\lambda}}{t - \lambda}
= \frac{1}{2} \frac{\lambda + t}{\sqrt{\lambda} (t - \lambda)^2} > 0
\end{align}
and
\begin{align}
\frac{d^2}{d\lambda^2} \frac{\sqrt{\lambda}}{t - \lambda}
= \frac{1}{4} \frac{6\lambda t + 3\lambda^2 - t^2}{\lambda^{3/2} (t - \lambda)^3} < 0,
\end{align}
and the definitions \eqref{eq-s} and \eqref{eq-sbar} of $s(\lambda)$ and $\s(\lambda)$.
\end{proof}

\subsection{Newton's root-finding algorithm}
\label{sec-newton}

Newton's method is a classical technique for finding the roots of a function of one real variable. We briefly review the method here; the reader may consult any standard reference on optimization or numerical analysis, such as \cite{nesterov2018lectures},  \cite{dahlquist1874numerical}, for additional details. We are given a smooth function $f(x)$, where $x \in [a,b]$, and we suppose $f(a) < 0$ and $f(b) > 0$. We also suppose that $f'(x) > 0$ and $f''(x) < 0$ for all $x \in (a,b)$; that is, $f$ is a strictly increasing and concave. Our goal is to compute $x^*$, the unique root of $f$ in $(a,b)$.

To find $x^*$, Newton's method initializes $a < x_0 < x^*$, and defines a sequence of updates recursively as follows: given an estimate $x_k$, the next value $x_{k+1}$ is defined by
\begin{align}
x_{k+1} = x_k - \frac{f(x_k)}{f'(x_k)}.
\end{align}
Geometrically, $x_{k+1}$ is the root of the line tangent to the graph of $f$ at the point $(x_k,f(x_k))$. Because $f$ is concave, it is easy to see that $x_k < x_{k+1} \le x^*$. 

%%%Obvious adjustments to the choice of initial point should be made if $f'(x) < 0$ and/or $f''(x) > 0$.

Because each $x_k$ is obtained from a linear approximation to $f$, the errors decay quadratically in the vicinity of $x^*$; that is
\begin{align}
|x_{k+1} - x^*| \le C |x_k - x^*|^2
\end{align}
when $|x_k - x^*|$ is sufficiently small.

\begin{rmk}
Quadratic convergence is what makes Newton's method an especially attractive algorithm when it is applicable -- that is, when both $f(x)$ and $f'(x)$ may be accurately evaluated, and $f$ exhibits the conditions specified above. In practical terms, quadratic convergence means that the number of accurately computed digits of $x^*$ approximately doubles after each iteration of the algorithm, until machine precision is reached.
\end{rmk}

\section{Mathematical apparatus}
\label{sec-math}

In this section, we analyze the master equations \eqref{eq-e2s} -- \eqref{eq-gdef}. Our results will provide the necessary tools to devise algorithms for computing the SSDT $\lambda^*$ and evaluating $s(\lambda)$ and $s'(\lambda)$ for $\lambda > \lambda^*$. 
%%%Throughout, we will denote by $\lambda^*$ the signal detection threshold, or equivalently the right endpoint of the support of $\mu$.
We define 
\begin{align}
a^* = \max_{1 \le i \le p} a_i,
\end{align}
and
\begin{align}
b^* = \max_{1 \le j \le n} b_j.
\end{align}

We define the function $F(\lambda,e)$ by:
\begin{align}
F(\lambda,e) = e - \sum_{i=1}^{p} \frac{a_i \omega_i}{a_i G(e) - \lambda}.
\end{align}
Then for each $\lambda > \lambda^*$, $e(\lambda)$ satisfies $F(\lambda,e(\lambda)) = 0$. When we treat $\lambda$ as a fixed parameter and $e$ as a variable, we will use the notation $F_\lambda(e) = F(\lambda,e)$.

%%%We record several facts about the functions $e(\lambda)$ and $F(\lambda,e)$. While some of these results may be derived from the description of $e(\lambda)$ found in \cite{couillet2014analysis}, we prefer to give a unified derivation working only from the master equations and the smoothness of $s(\lambda)$ and $e(\lambda)$, as the intermediate results will prove useful for further analysis.

%

\subsection{Range and monotonicity of $e(\lambda)$ when $\lambda > \lambda^*$}
We first state a result on the range of $e(\lambda)$ for $\lambda > \lambda^*$. The function $G(e)$ approaches $0$ as $e \to \infty$, and grows to $+\infty$ as $e \to (-1/\gamma b^*)^+$, and is strictly decreasing on the interval 
\begin{align}
J \equiv \left\{ e : e > \frac{-1}{\gamma b^*} \right\}.
\end{align}
For any $\lambda > 0$, we define the interval $I_\lambda$ by
\begin{align}
I_\lambda \equiv \left\{ e \in J : G(e) < \frac{\lambda}{a^*} \right\}.
\end{align}

\begin{prop}
\label{prop-range-e}
When $\lambda > \lambda^*$, $e(\lambda)$ is contained in the interval $I_\lambda \cap (-\infty,0)$.
\end{prop}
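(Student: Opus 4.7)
My plan is to split the proposition into two sub-claims: (a) $G(e(\lambda)) < \lambda/a^*$, equivalently $h_i(\lambda) := a_i G(e(\lambda)) - \lambda < 0$ for every $i$, which together with the implicit constraint $e(\lambda) \in J$ gives $e(\lambda) \in I_\lambda$; and (b) $e(\lambda) < 0$. Membership in $J$ is itself implicit in the setup, since the series \eqref{eq-gdef} only converges to a finite positive value when $1 + \gamma b_j e(\lambda) > 0$ for every $j$, and this positivity is part of the smoothness result from \cite{couillet2014analysis} that we rely on.

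For (a), I would argue by sign-preservation. By the cited smoothness of $e(\lambda)$ on $(\lambda^*, \infty)$, each $h_i$ is continuous there. Moreover $h_i$ cannot vanish on this interval: if $h_i(\lambda_0) = 0$ at some $\lambda_0 > \lambda^*$, the $i$-th summand on the right of \eqref{eq-implicit-e} would blow up, contradicting the finiteness of $e(\lambda_0)$. Hence each $h_i$ has constant sign on $(\lambda^*, \infty)$, and I pin that sign down by examining $\lambda \to \infty$. If $e(\lambda)$ were unbounded along some sequence going to infinity, then (since $e(\lambda) \in J$) we would have $e(\lambda) \to +\infty$ along that sequence, whence $G(e(\lambda)) \to 0$, and the right side of \eqref{eq-implicit-e} becomes $-(\sum_i a_i \omega_i)/\lambda + o(1/\lambda) \to 0$, contradicting unboundedness of $e$. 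So $e(\lambda)$ is bounded near infinity, $G(e(\lambda))$ is bounded, and \eqref{eq-implicit-e} then forces $e(\lambda) \to 0$. It follows that $h_i(\lambda) \to -\infty$, so $h_i(\lambda) < 0$ throughout $(\lambda^*, \infty)$. Minimizing $\lambda/a_i$ over $i$ delivers $G(e(\lambda)) < \lambda/a^*$.

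Claim (b) is then immediate: with $h_i(\lambda) < 0$ for every $i$, each summand $a_i \omega_i / h_i(\lambda)$ in \eqref{eq-implicit-e} is strictly negative, and hence so is $e(\lambda)$. The step that I expect to require the most care is the asymptotic argument $e(\lambda) \to 0$ as $\lambda \to \infty$, since this is what bridges the qualitative smoothness input imported from \cite{couillet2014analysis} and the quantitative sign conclusion needed here; the remaining steps are a routine application of continuity and the intermediate value theorem applied to the $h_i$.
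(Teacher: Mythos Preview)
Your overall strategy --- fix the sign of each $h_i$ by continuity, then pin it down via the large-$\lambda$ limit --- is a natural alternative to the paper's route, which instead derives and exploits the identity $1+\lambda s(\lambda)=G(e(\lambda))\,e(\lambda)$ together with the boundedness of $\lambda s(\lambda)$. Your derivation of (b) from (a) is in fact cleaner than the paper's, which again goes through that identity.

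However, there is a real gap in the asymptotic step. From ``$e(\lambda)$ is bounded near infinity'' you jump to ``$G(e(\lambda))$ is bounded,'' but boundedness of $e(\lambda)$ in $J=(-1/(\gamma b^*),\infty)$ does not prevent $e(\lambda)$ from drifting toward the left endpoint $-1/(\gamma b^*)$, where $G$ blows up. Nothing you have written rules this out: indeed, if some $h_{i_0}$ were positive throughout $(\lambda^*,\infty)$ you would have $G(e(\lambda))>\lambda/a_{i_0}\to\infty$ and hence $e(\lambda)\to(-1/(\gamma b^*))^+$, and excluding this scenario directly from \eqref{eq-implicit-e} requires nontrivial further work. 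The paper closes exactly this gap with the identity above: since $\lambda s(\lambda)$ is bounded on $(\lambda^*+\epsilon,\infty)$, the product $G(e(\lambda))\,e(\lambda)=1+\lambda s(\lambda)$ is bounded, so $e(\lambda)$ cannot approach any pole $-1/(\gamma b_j)$ of $G$ (there $|e|$ stays bounded away from zero while $|G|\to\infty$). This is the missing ingredient in your argument.

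A secondary issue: your justification that $e(\lambda)\in J$ is not sound as written. Equation \eqref{eq-gdef} is a finite sum, not a series; it is defined (and can take either sign) for every $e$ away from the poles $-1/(\gamma b_j)$, so finiteness of $G(e(\lambda))$ does not force $1+\gamma b_j e(\lambda)>0$ for all $j$. The paper establishes $e(\lambda)\in J$ as a consequence of the same boundedness argument (Corollary~\ref{cor-123}): once $e(\lambda)$ is known to avoid every pole and to tend to $0$ at infinity, continuity traps it in the rightmost component $J$.
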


We will develop the proof in several steps.

\begin{lem}
Let $\epsilon > 0$. Then the function $G(e(\lambda))$ is bounded for all $\lambda > \lambda^* + \epsilon$.
\end{lem}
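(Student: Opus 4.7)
The plan is to derive a convenient algebraic identity from the master equations and combine it with uniform boundedness of $1 + \lambda s(\lambda)$ obtained from the integral representation of $s$, then argue by contradiction. The argument will tacitly use that $e(\lambda)\in J$ for $\lambda>\lambda^*$, so that $G(e(\lambda))$ is well defined; this is part of what \cite{couillet2014analysis} establishes.

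First, I would multiply both sides of \eqref{eq-implicit-e} by $G(e(\lambda))$ and rewrite each numerator as $a_i G(e(\lambda)) = (a_i G(e(\lambda)) - \lambda) + \lambda$. Splitting the summand and using $\sum_i \omega_i = 1$ together with \eqref{eq-e2s} yields the clean identity
\[
G(e(\lambda))\, e(\lambda) \;=\; 1 + \lambda\, s(\lambda).
\]

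Next, using \eqref{eq-s} I would rewrite $1 + \lambda s(\lambda) = \int \frac{t}{t-\lambda}\, d\mu(t)$. Since $\mu$ is supported in $[0, \lambda^*]$, for every $\lambda > \lambda^* + \epsilon$ the integrand is bounded in absolute value by $\lambda^*/\epsilon$, so $|1 + \lambda s(\lambda)| \le \lambda^*/\epsilon$ uniformly on $(\lambda^* + \epsilon, \infty)$.

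Finally, suppose for contradiction that $G(e(\lambda_n)) \to \infty$ along some sequence $\lambda_n > \lambda^* + \epsilon$. Because each summand in the definition \eqref{eq-gdef} of $G$ is positive on $J$, we have $G(e(\lambda_n)) > 0$, and the identity from the first step gives $|e(\lambda_n)| = |1 + \lambda_n s(\lambda_n)|/G(e(\lambda_n)) \to 0$. But $G$ is continuous on $J$ and $0 \in J$, so $G(e(\lambda_n)) \to G(0) = \sum_{j=1}^{n} b_j \pi_j < \infty$, contradicting the assumption. The main obstacle is the first step, where one must spot the regrouping $a_i G(e) = (a_i G(e) - \lambda) + \lambda$; once the identity $G(e(\lambda))\, e(\lambda) = 1 + \lambda s(\lambda)$ is in hand, the remainder is routine, and the resulting bound depends only on $\epsilon$, $\lambda^*$, and $\sum_j b_j \pi_j$.
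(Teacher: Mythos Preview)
Your proof is correct and matches the paper's approach: both derive the key identity $G(e(\lambda))\,e(\lambda) = 1 + \lambda s(\lambda)$ (the paper via the auxiliary $H(\lambda)=G(e(\lambda))/\lambda$, you by multiplying \eqref{eq-implicit-e} through by $G$ and splitting the numerator) and then use boundedness of $1+\lambda s(\lambda)$ on $(\lambda^*+\epsilon,\infty)$ to rule out blow-up of $G(e(\lambda))$. One caution: in the paper's logical order this lemma is a step \emph{toward} proving $e(\lambda)\in J$ (Corollary~\ref{cor-123}), so invoking that fact up front would be circular there---but your contradiction step only actually needs continuity of $G$ in a neighborhood of $0$, which holds regardless since the poles $-1/(\gamma b_j)$ are bounded away from $0$, so the argument stands with that reference dropped.
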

\begin{proof}
We show that the range of $e(\lambda)$ cannot approach any of the singularities of $G$, which lie at the values $-1/(\gamma b_j)$. Define 
\begin{align}
H(\lambda) = \frac{G(e(\lambda))}{\lambda}.
\end{align}
Then from \eqref{eq-e2s}
\begin{align}
\lambda s(\lambda) = \sum_{i=1}^{p} \frac{\omega_i}{a_i H(\lambda) - 1},
\end{align}
and consequently
\begin{align}
\label{eq123}
1 + \lambda s(\lambda) = \sum_{i=1}^{p} \frac{\omega_i}{a_i H(\lambda) - 1}
    + \sum_{i=1}^{p} \omega_i\frac{a_i H(\lambda) - 1}{a_i H(\lambda) - 1}
= \sum_{i=1}^{p} \frac{a_i H(\lambda) \omega_i }{a_i H(\lambda) - 1}
= G(e(\lambda)) e(\lambda).
\end{align}
Since $\lambda s(\lambda)$ is bounded for $\lambda > \lambda^* + \epsilon$, this tells us that $G(e(\lambda))$ must stay bounded so long as $e(\lambda)$ is bounded away from $0$; in particular, $e(\lambda)$ cannot be made arbitrarily close to any of the singularities $-1/(\gamma b_j)$.
\end{proof}

\begin{cor}
$e(\lambda) \to 0^-$ as $\lambda \to \infty$.
\end{cor}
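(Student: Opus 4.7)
The plan is to exploit the identity derived just above the corollary, namely
\begin{align}
G(e(\lambda)) e(\lambda) = 1 + \lambda s(\lambda),
\end{align}
together with the boundedness of $G(e(\lambda))$ just established. First I would show that $\lambda s(\lambda) \to -1$ as $\lambda \to \infty$. Since $\mu$ has compact support, for $\lambda$ larger than $\sup \supp(\mu) = \lambda^*$ one may expand $(t - \lambda)^{-1} = -\lambda^{-1}\sum_{k \ge 0} (t/\lambda)^k$ uniformly in $t \in \supp(\mu)$ and integrate against $\mu$, giving $\lambda s(\lambda) = -1 + O(1/\lambda)$. Consequently $G(e(\lambda)) e(\lambda) \to 0$.

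Next I would control the sign of $e(\lambda)$ for large $\lambda$. By the preceding lemma, there is a constant $M$ such that $G(e(\lambda)) \le M$ on $\lambda > \lambda^* + \epsilon$. Plugging this into the implicit equation \eqref{eq-implicit-e}, for any $\lambda > a^* M$ each denominator $a_i G(e(\lambda)) - \lambda$ is strictly negative, and since $a_i, \omega_i > 0$ every summand is negative; hence $e(\lambda) < 0$ for all sufficiently large $\lambda$.

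The conclusion then follows from the monotonicity of $G$. Recall that $G$ is strictly decreasing on $J$, with $G(0) = \sum_{j=1}^n b_j \pi_j > 0$. Because $e(\lambda) < 0$ lies in $J$ for large $\lambda$, we have $G(e(\lambda)) \ge G(0) > 0$, so $G(e(\lambda))$ is bounded away from $0$. Combined with $G(e(\lambda)) e(\lambda) \to 0$, this forces $e(\lambda) \to 0$, and since $e(\lambda)$ is negative along the tail, the convergence is from below, i.e.\ $e(\lambda) \to 0^-$.

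There is no real obstacle here once the identity \eqref{eq123} is in hand; the only thing to be careful about is using compact support of $\mu$ (rather than any finer asymptotic expansion of $s$) to get $\lambda s(\lambda) \to -1$, and then using the implicit equation itself to pin down the sign of $e(\lambda)$ before invoking monotonicity of $G$.
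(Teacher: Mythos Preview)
Your argument takes a longer path than necessary and, in doing so, introduces a circularity. The paper's proof is one line: since $G(e(\lambda))$ is bounded (say $|G(e(\lambda))|\le M$), the defining relation \eqref{eq-implicit-e} gives, for $\lambda>a^*M$,
\[
|e(\lambda)| \le \sum_{i=1}^p \frac{a_i\omega_i}{|a_i G(e(\lambda))-\lambda|}
\le \frac{\sum_i a_i\omega_i}{\lambda-a^*M}\to 0,
\]
and every denominator is negative, so $e(\lambda)<0$. Your step~4 is exactly this computation, but you only extract the sign from it, not the convergence.

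The gap is in step~5: you assert $e(\lambda)\in J$ in order to invoke monotonicity of $G$ on $J$ and conclude $G(e(\lambda))\ge G(0)>0$. But in the paper's logical order, membership in $J$ is Corollary~\ref{cor-123}, proved \emph{after} and \emph{using} the present corollary (via continuity of $e$ together with $e(\lambda)\to 0$). At this stage all we know is that $e(\lambda)$ avoids the singularities $-1/(\gamma b_j)$ and is eventually negative; that does not determine which connected component of $\R\setminus\{-1/(\gamma b_j)\}$ it occupies, so the lower bound $G(e(\lambda))\ge G(0)$ is not yet available. Hence the detour through $G(e(\lambda))\,e(\lambda)=1+\lambda s(\lambda)\to 0$ cannot be closed as written. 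The fix is simply to read off $e(\lambda)\to 0$ directly from \eqref{eq-implicit-e} in step~4, which makes steps~5--6 (and the identity \eqref{eq123}, and even the asymptotic $\lambda s(\lambda)\to -1$) unnecessary.
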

\begin{proof}
Because $G(e(\lambda))$ is bounded for large $\lambda$, the result follows from \eqref{eq-implicit-e}.
\end{proof}

\begin{cor}
\label{cor-123}
For any $\lambda > \lambda^*$, $e(\lambda) \in J$; that is,
\begin{align}
e(\lambda) > \frac{-1}{\gamma b^*}.
\end{align}
\end{cor}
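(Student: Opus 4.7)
The plan is to combine three ingredients already in place: the previous lemma showing $G(e(\lambda))$ is uniformly bounded once $\lambda$ is bounded away from $\lambda^*$; the previous corollary that $e(\lambda) \to 0^-$ as $\lambda \to \infty$; and the smoothness (hence continuity) of $e(\lambda)$ on $(\lambda^*, \infty)$ cited from \cite{couillet2014analysis}. A brief preliminary observation is that, since $b^* = \max_j b_j$ implies $-1/(\gamma b^*) \ge -1/(\gamma b_j)$ for every $j$, the value $-1/(\gamma b^*)$ is the rightmost (least negative) singularity of the rational function $G$, and $J = (-1/(\gamma b^*), \infty)$ is precisely the connected component of the domain of $G$ that contains $0$.

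First I would fix an arbitrary $\lambda_0 > \lambda^*$ and choose $\epsilon = (\lambda_0 - \lambda^*)/2$, so that the previous lemma gives a uniform bound on $G(e(\lambda))$ for all $\lambda \ge \lambda_0$. Then I would argue by contradiction, assuming that $e(\lambda_0) \le -1/(\gamma b^*)$. Because $e(\lambda) \to 0^-$ as $\lambda \to \infty$, I can pick $\lambda_2 > \lambda_0$ with $e(\lambda_2) > -1/(\gamma b^*)$. Continuity of $e$ on $[\lambda_0, \lambda_2]$ and the intermediate value theorem then yield a point $\lambda_1 \in [\lambda_0, \lambda_2]$ at which $e(\lambda_1) = -1/(\gamma b^*)$ exactly. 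Selecting any index $j_0$ with $b_{j_0} = b^*$, the corresponding term $b_{j_0} \pi_{j_0}/(1 + \gamma b_{j_0} e(\lambda_1))$ in the definition \eqref{eq-gdef} of $G$ has a vanishing denominator, forcing $G(e(\lambda_1)) = +\infty$. This contradicts the uniform bound from the previous lemma on $[\lambda_0, \infty)$, and hence $e(\lambda_0) > -1/(\gamma b^*)$.

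I do not expect a serious obstacle, since the previous lemma does most of the heavy lifting and what remains is essentially a continuity argument. The mildest subtlety is making sure the endpoint behavior near $\lambda^*$ does not cause trouble, but this is handled simply by working on $[\lambda_0, \infty)$ with $\lambda_0$ strictly greater than $\lambda^*$. A second minor care point is verifying that $e(\lambda)$ is actually defined along the path used for the intermediate value step; this follows from the smoothness statement quoted from \cite{couillet2014analysis}, since $e$ is guaranteed to be a well-defined smooth function of $\lambda$ on all of $(\lambda^*, \infty)$.
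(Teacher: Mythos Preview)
Your proof is correct and follows essentially the same route as the paper: continuity of $e(\lambda)$ on $(\lambda^*,\infty)$, the fact that $e(\lambda)\to 0$ at infinity, and the previous lemma's boundedness of $G(e(\lambda))$ (which forbids $e(\lambda)$ from hitting any singularity $-1/(\gamma b_j)$) together force $e(\lambda)$ to remain in the connected component $J$ containing $0$. The paper's proof is a one-line summary of this; you have merely spelled out the intermediate value step more explicitly.
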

\begin{proof}
This follows from the continuity of $e(\lambda)$ for $\lambda > \lambda^*$, and the facts that it never passes through $-1/(\gamma b_j)$ and approaches 0 at large $\lambda$.
\end{proof}

\begin{cor}
For all $\lambda > \lambda^*$, $G(e(\lambda)) > 0$ and $e(\lambda) < 0$.
\end{cor}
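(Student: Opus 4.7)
My plan is to dispatch the two claims separately, using the preceding results plus the identity in equation \eqref{eq123}.

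For $G(e(\lambda)) > 0$, I would rely entirely on the preceding corollary showing $e(\lambda) \in J$, that is $e(\lambda) > -1/(\gamma b^*)$. Since $b^* = \max_j b_j$, for every $j$ we have $1 + \gamma b_j e(\lambda) \geq 1 + \gamma b_j \cdot (-1/(\gamma b^*)) = 1 - b_j/b^* \geq 0$, with strict inequality because $e(\lambda)$ is strictly greater than $-1/(\gamma b^*)$. Thus every term in the sum defining $G$ is strictly positive, and $G(e(\lambda)) > 0$.

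For $e(\lambda) < 0$, the key tool is the identity
\begin{align}
G(e(\lambda)) \, e(\lambda) = 1 + \lambda s(\lambda)
\end{align}
derived in \eqref{eq123}. Since the left side is $G(e(\lambda)) e(\lambda)$ and $G(e(\lambda)) > 0$ by the first part, the sign of $e(\lambda)$ is opposite to the sign of $-(1 + \lambda s(\lambda))$. So it suffices to show $1 + \lambda s(\lambda) < 0$ for $\lambda > \lambda^*$. From the definition \eqref{eq-s}, for $\lambda > \lambda^*$ we can use the algebraic rearrangement $\lambda/(t-\lambda) = -1 - t/(\lambda - t)$ to obtain
\begin{align}
1 + \lambda s(\lambda) = 1 + \int_{\R} \frac{\lambda}{t - \lambda}\, d\mu(t) = - \int_{\R} \frac{t}{\lambda - t}\, d\mu(t).
\end{align}
Since $\mu$ is supported in $[0,\lambda^*]$ and $\lambda > \lambda^*$, the integrand $t/(\lambda - t)$ is nonnegative on the support and strictly positive off the atom at zero (if any). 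Because $\mu$ has total mass $1$ and can place at most mass $\max(0, 1 - 1/\gamma)$ at the origin, a positive portion of the mass sits on $(0,\lambda^*]$, so the integral is strictly positive, giving $1 + \lambda s(\lambda) < 0$ and hence $e(\lambda) < 0$.

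The only subtle point is justifying the strictness of the integral, which reduces to confirming that $\mu$ is not a point mass at $0$. This is immediate from the model since $\A$ and $\B$ are positive definite and $\gamma$ is finite and positive: the nonzero eigenvalues of $\N\N^T$ carry at least a fraction $\min(1, 1/\gamma) > 0$ of the spectral mass in the limit. No other obstacle arises, as the first part and \eqref{eq123} do the real work.
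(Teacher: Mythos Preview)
Your proof is correct and follows essentially the same approach as the paper: positivity of $G(e(\lambda))$ from Corollary~\ref{cor-123}, and negativity of $e(\lambda)$ from the identity \eqref{eq123} combined with $1+\lambda s(\lambda)<0$. The only difference is that the paper simply asserts $1+\lambda s(\lambda)<0$ as a known fact, whereas you spell out the justification via the integral rearrangement and the observation that $\mu$ carries positive mass away from the origin.
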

\begin{proof}
The positivity of $G(e(\lambda))$ follows immediately from Corollary \ref{cor-123}. The negativity of $e(\lambda)$ then follows from \eqref{eq123}, and the fact that $1 + \lambda s(\lambda) < 0$.
\end{proof}

\begin{lem}
For all $\lambda > \lambda^*$,
\begin{align}
\frac{G(e(\lambda))}{\lambda} <  \frac{1}{a^*}.
\end{align}
\end{lem}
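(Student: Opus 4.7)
The plan is to prove the inequality by contradiction, leveraging the function $H(\lambda) = G(e(\lambda))/\lambda$ already introduced in the preceding proof, together with the identity \eqref{eq123}. The key observation is that the RHS of \eqref{eq123} develops a singularity precisely when $a^* H(\lambda) = 1$, while the LHS is finite for $\lambda > \lambda^*$.

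First, I would establish that $H$ is continuous on $(\lambda^*,\infty)$: by the cited analysis in \cite{couillet2014analysis}, $e(\lambda)$ is smooth for real $\lambda$ outside the support of $\mu$, and by Corollary \ref{cor-123} its range lies in the interval $J$ on which $G$ is smooth. Next I would recall that, as shown in the previous corollary, $e(\lambda) \to 0^-$ as $\lambda \to \infty$, so that $G(e(\lambda)) \to G(0) = \sum_j b_j \pi_j < \infty$, and hence $H(\lambda) \to 0$ as $\lambda \to \infty$. In particular, $H(\lambda) < 1/a^*$ for all sufficiently large $\lambda$.

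Now suppose for contradiction that $H(\lambda_0) \ge 1/a^*$ for some $\lambda_0 > \lambda^*$. By continuity of $H$ and the fact that $H(\lambda) \to 0 < 1/a^*$, the intermediate value theorem and closedness of the level set $\{\lambda \ge \lambda_0 : H(\lambda) = 1/a^*\}$ let me choose the supremum $\lambda_1$ of this set, so that $H(\lambda_1) = 1/a^*$ and $H(\lambda) < 1/a^*$ for every $\lambda > \lambda_1$. I would then take the limit $\lambda \to \lambda_1^+$ in the identity $1 + \lambda s(\lambda) = \sum_{i=1}^{p} \frac{a_i H(\lambda) \omega_i}{a_i H(\lambda) - 1}$ from \eqref{eq123}. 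The LHS converges to the finite value $1 + \lambda_1 s(\lambda_1)$ since $\lambda_1 > \lambda^*$ lies outside the support of $\mu$. On the RHS, for each index $i$ with $a_i = a^*$, the denominator $a_i H(\lambda) - 1 = a^*(H(\lambda) - 1/a^*)$ tends to $0^-$ while the numerator tends to $\omega_i > 0$, so that term diverges to $-\infty$; the remaining terms, with $a_i < a^*$, have denominators tending to $a_i/a^* - 1 < 0$ and stay bounded. Thus the RHS tends to $-\infty$, contradicting the finiteness of the LHS.

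The main obstacle I anticipate is justifying one-sided continuity at $\lambda_1$: one needs $e(\lambda)$ to be continuous from the right at $\lambda_1$ so that the limit of each summand on the RHS is controlled as described. This is where the smoothness of $e$ on $(\lambda^*,\infty)$ from \cite{couillet2014analysis} is essential; once it is invoked, the rest of the argument reduces to tracking which terms in the master sum diverge and with what sign.
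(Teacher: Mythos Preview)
Your proof is correct and follows essentially the same continuity-plus-limit-at-infinity contradiction as the paper. The paper streamlines your one-sided limit step by observing directly from \eqref{eq-e2s} that $H(\lambda)$ can never equal $1/a_i$ for any $i$ (since $s(\lambda)$ is finite for $\lambda>\lambda^*$), which immediately rules out the crossing point $\lambda_1$ and makes the limiting argument---and the continuity concern you flagged---unnecessary.
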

\begin{proof}
We have $G(e(\lambda)) \ne \lambda / a_i$, from \eqref{eq-e2s}. Suppose for some $\lambda > \lambda^*$, we had
\begin{align}
\frac{G(e(\lambda))}{\lambda} > \frac{1}{a_i}.
\end{align}
The inequality must then remain true for all sufficiently large $\lambda$, since $G(e(\lambda)) / \lambda$ is continuous and does not pass through $1/a_i$. However, the left side converges to $0$ as $\lambda \to \infty$, since $G(e(\lambda))$ is bounded for large $\lambda$; a contradiction. This completes the proof.
\end{proof}

We have shown that for all $\lambda > \lambda^*$, $e(\lambda)$ lies in the interval defined by the inequalities
\begin{align}
G(e) \le  \frac{\lambda}{a^*},
    \quad \quad
e > \frac{-1}{\gamma b^*}, 
    \quad \quad
e < 0.
\end{align}

This completes the proof of Proposition \ref{prop-range-e}.

Next we prove that $e(\lambda)$ is increasing:

\begin{prop}
The function $e(\lambda)$ is increasing for $\lambda > \lambda^*$.
\end{prop}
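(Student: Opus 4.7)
The plan is to differentiate the identity
\[ 1 + \lambda s(\lambda) = G(e(\lambda))\, e(\lambda), \]
established in \eqref{eq123}, and then show that the resulting expression for $e'(\lambda)$ is a ratio of two strictly positive quantities whenever $\lambda > \lambda^*$. Differentiating both sides with respect to $\lambda$ and solving for $e'(\lambda)$ gives
\[ e'(\lambda) \;=\; \frac{s(\lambda) + \lambda s'(\lambda)}{G(e(\lambda)) + G'(e(\lambda))\, e(\lambda)}. \]

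For the denominator, I would combine the series defining $G$ and $G'$ into a common fraction, which telescopes to the closed form
\[ G(e) + G'(e)\, e \;=\; \sum_{j=1}^{n} \frac{b_j \pi_j}{(1+\gamma b_j e)^2}. \]
By Proposition \ref{prop-range-e}, $e(\lambda) \in J$, so each factor $1 + \gamma b_j e(\lambda)$ is strictly positive, making the whole denominator positive.

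For the numerator, I would write $s(\lambda) + \lambda s'(\lambda) = (\lambda s(\lambda))'$ and differentiate the representation $\lambda s(\lambda) = \int_{\R} \lambda/(t-\lambda)\, d\mu(t)$ under the integral sign to obtain
\[ (\lambda s(\lambda))' \;=\; \int_{\R} \frac{t}{(t-\lambda)^2}\, d\mu(t). \]
Since $\mu$ is a probability measure supported on $[0, \lambda^*]$ with $\lambda^* > 0$, the integrand is nonnegative and strictly positive on a set of positive $\mu$-measure, so the numerator is strictly positive.

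These two sign facts together give $e'(\lambda) > 0$ for $\lambda > \lambda^*$. The only step requiring any real work is the algebraic simplification of the denominator; the positivity of the numerator is immediate from the integral representation, and the rest is a direct application of the identity \eqref{eq123} already derived in the previous subsection.
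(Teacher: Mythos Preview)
Your argument is correct and takes a genuinely different route from the paper's. The paper implicitly differentiates $F(\lambda,e(\lambda))=0$ to obtain
\[
e'(\lambda)\,(\partial_e F)(\lambda,e(\lambda)) \;=\; \sum_{i=1}^{p}\frac{a_i\omega_i}{(a_iG(e(\lambda))-\lambda)^2} \;>\;0,
\]
which only shows $e'(\lambda)\ne 0$; it then argues that $(\partial_e F)(\lambda,e(\lambda))\to 1$ as $\lambda\to\infty$, so $e'(\lambda)>0$ for large $\lambda$, and invokes continuity to propagate the sign down to all of $(\lambda^*,\infty)$. Your approach instead differentiates the identity $1+\lambda s(\lambda)=G(e(\lambda))\,e(\lambda)$ from \eqref{eq123}, and the algebraic simplification $G(e)+eG'(e)=\sum_j b_j\pi_j/(1+\gamma b_j e)^2$ makes the denominator manifestly positive on $J$, while the numerator $(\lambda s(\lambda))'=\int t/(t-\lambda)^2\,d\mu(t)$ is manifestly positive since $\mu$ is supported on $[0,\lambda^*]$ and not a point mass at zero. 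This is more direct: you get $e'(\lambda)>0$ pointwise without the continuity-plus-limit detour. The paper's route, on the other hand, yields the side benefit $(\partial_e F)(\lambda,e(\lambda))>0$ (Proposition \ref{prop-identify}), which it later uses to identify $e(\lambda)$ as the rightmost root of $F_\lambda$; your proof does not deliver this directly, so that fact would still need the paper's computation.
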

\begin{proof}
We have:
\begin{align}
(\partial_\lambda F) (\lambda,e)
= -\sum_{i=1}^{p} \frac{a_i \omega_i}{(a_i G(e) - \lambda)^2} < 0.
\end{align}
Since $F(\lambda,e(\lambda)) = 0$, we have:
\begin{align}
\label{eq12345}
0 = \frac{\partial}{\partial \lambda} \{ F(\lambda,e(\lambda))\}
= (\partial_\lambda F)(\lambda, e(\lambda))
    + e'(\lambda) (\partial_e F ) (\lambda, e(\lambda)) ,
\end{align}
and so
\begin{align}
\label{eq4567}
e'(\lambda) (\partial_e F)  (\lambda, e(\lambda))
= \sum_{i=1}^{p} \frac{a_i \omega_i}{(a_i G(e) - \lambda)^2} > 0.
\end{align}
Consequently, $e'(\lambda)$ can never be $0$. Furthermore,
\begin{align}
(\partial_e F) (\lambda, e)
= 1 + G'(e(\lambda)) 
    \sum_{i=1}^{p} \left( \frac{a_i}{a_i G(e(\lambda)) - \lambda} \right)^2 \omega_i
\end{align}
which converges to 1 as $\lambda \to \infty$ (note that $e(\lambda)$ stays bounded away from singularities of $G(e)$ and also $G'(e)$, which have the same singularities). So $e'(\lambda) > 0$ for sufficiently large $\lambda$, and hence, since it is continuous and cannot pass through 0, $e'(\lambda) > 0$ for all $\lambda > \lambda^*$.
\end{proof}

\subsection{Behavior of $F_\lambda(e)$}
\label{sec-fbehavior}

In this section we characterize the behavior of $F_\lambda(e) = F(\lambda,e)$ (viewed as a function of $e$) on the interval $I_\lambda$. Specifically, we show the following. For any $\lambda > 0$, $F_\lambda(e)$ is a strictly convex function that approaches $+\infty$ as $e$ approaches either end of $I_\lambda$. Furthermore, when $\lambda > \lambda^*$, the minimum value of $F_\lambda(e)$ is less than zero, and $F_\lambda(0) > 0$; consequently, there are exactly two roots of $F_\lambda(e)$, both contained in the interval $I_\lambda \cap (-\infty,0)$. We show that $e(\lambda)$ is always equal to the largest root, namely the one at which $F_\lambda'(e) > 0$.

\begin{lem}
For $\lambda > 0$, the function $F_\lambda(e)$ is strictly convex on $I_\lambda$; that is, $(\partial_{ee}^2 F)(\lambda,e) > 0$.
\end{lem}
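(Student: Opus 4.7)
The plan is to prove strict convexity by a direct computation of $\partial_{ee}^2 F(\lambda,e)$ and verifying that every resulting term has the correct sign on $I_\lambda$. Since $F_\lambda$ is a finite sum plus the identity, differentiation is routine; the whole content is sign bookkeeping.

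First I will differentiate the building block $G(e) = \sum_j \frac{b_j \pi_j}{1+\gamma b_j e}$ twice and record that on the interval $J$, where $1+\gamma b_j e > 0$ for every $j$, one has $G'(e) = -\gamma \sum_j \frac{b_j^2 \pi_j}{(1+\gamma b_j e)^2} < 0$ and $G''(e) = 2\gamma^2 \sum_j \frac{b_j^3 \pi_j}{(1+\gamma b_j e)^3} > 0$. Thus $G$ is strictly decreasing and strictly convex on $J$, and in particular on the subinterval $I_\lambda \subset J$.

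Next I will compute $\partial_{ee}^2 F_\lambda$ from
\begin{align}
F_\lambda(e) = e - \sum_{i=1}^{p} \frac{a_i \omega_i}{a_i G(e) - \lambda},
\end{align}
using the quotient rule twice. Setting $v_i(e) = a_i G(e) - \lambda$, I obtain $\partial_e F_\lambda = 1 + G'(e) \sum_i \frac{a_i^2 \omega_i}{v_i^2}$ and then
\begin{align}
\partial_{ee}^2 F_\lambda(e) = G''(e) \sum_{i=1}^{p} \frac{a_i^2 \omega_i}{v_i(e)^2} - 2 G'(e)^2 \sum_{i=1}^{p} \frac{a_i^3 \omega_i}{v_i(e)^3}.
\end{align}

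Finally I will read off strict positivity term by term. By the definition of $I_\lambda$, $G(e) < \lambda / a^* \le \lambda/a_i$, so $v_i(e) < 0$ for every $i$; hence $v_i(e)^2 > 0$ and $v_i(e)^3 < 0$. The first summand is a strictly positive number times $G''(e) > 0$, and the second summand equals $-2G'(e)^2$ times a strictly negative quantity, hence is strictly positive (it is non-negative always, and strictly positive since $G'(e)$ does not vanish on $J$). Adding the two positive contributions gives $\partial_{ee}^2 F_\lambda(e) > 0$ throughout $I_\lambda$, which is the claim.

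The only real obstacle is keeping sign conventions consistent between the form $v_i = a_i G(e) - \lambda$ used in the definition of $F$ and the more natural positive quantity $\lambda - a_i G(e)$; once the formula for $\partial_{ee}^2 F_\lambda$ is written down, both contributions are manifestly positive because $G$ is strictly convex and strictly decreasing on $J$ and $v_i$ has a definite sign on $I_\lambda$.
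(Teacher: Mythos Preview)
Your proof is correct and follows essentially the same approach as the paper: both compute $\partial_{ee}^2 F_\lambda(e) = G''(e)\sum_i a_i^2\omega_i/v_i^2 - 2G'(e)^2\sum_i a_i^3\omega_i/v_i^3$ and then verify positivity term by term using $G''(e)>0$, $v_i(e)<0$ on $I_\lambda$. Your introduction of the shorthand $v_i = a_iG(e)-\lambda$ and the explicit remark that $G'(e)\neq 0$ on $J$ are minor cosmetic additions, but the argument is the same.
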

\begin{proof}
We have:
\begin{align}
(\partial_e F) (\lambda, e)
= 1 + G'(e(\lambda)) 
    \sum_{i=1}^{p} \left( \frac{a_i}{a_i G(e(\lambda)) - \lambda} \right)^2 \omega_i
\end{align}
and
\begin{align}
(\partial_{ee}^2 F) (\lambda, e)
= G''(e)\sum_{i=1}^{p}\left( \frac{a_i}{a_i G(e) - \lambda} \right)^2 \omega_i
    - 2G'(e)^2\sum_{i=1}^{p}\left( \frac{a_i}{a_i G(e) - \lambda} \right)^3 \omega_i.
\end{align}
Now whenever $e \in I_\lambda$, we have $e > -1/\gamma b^* \ge -1/\gamma b_j$ for all $1 \le j \le n$, and $G(e) < \lambda/a^* \le \lambda / a_i$ for all $1 \le i \le p$; consequently,
\begin{align}
G''(e) = 2\gamma^2 \sum_{j=1}^{n} \left( \frac{b_j}{1+\gamma b_j e} \right)^3 \pi_j> 0
\end{align}
and
\begin{align}
\sum_{i=1}^{p}\left( \frac{a_i}{a_i G(e) - \lambda} \right)^3 \omega_i < 0.
\end{align}
Consequently, $(\partial_{ee}^2 F) (\lambda, e) > 0$ for all $e \in I_\lambda$, i.e.\ the function $F_\lambda(e)$ is convex.
\end{proof}

\begin{lem}
The function $F_\lambda(e)$ diverges to $+\infty$ as $e \to +\infty$ and as $e$ approaches the left endpoint of $I_\lambda$ from the right.
\end{lem}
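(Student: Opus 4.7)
The plan is to pin down the two endpoints of $I_\lambda$ explicitly and then read off the behavior of $F_\lambda$ directly from the formula
\[
F_\lambda(e) = e - \sum_{i=1}^{p} \frac{a_i \omega_i}{a_i G(e) - \lambda}.
\]

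First I would identify $I_\lambda$ as an open half-line. Since $G$ is strictly decreasing on $J$ with $G(e)\to +\infty$ as $e \to (-1/\gamma b^*)^+$ and $G(e)\to 0^+$ as $e\to\infty$, the intermediate value theorem gives a unique $e_* \in J$ with $G(e_*) = \lambda/a^*$, and $I_\lambda = (e_*, \infty)$.

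For $e \to +\infty$, the calculation is essentially immediate: $G(e) \to 0$ implies $a_i G(e) - \lambda \to -\lambda$ for every $i$, so the sum in $F_\lambda(e)$ converges to the finite limit $-\lambda^{-1}\sum_i a_i\omega_i$. The leading $e$ term dominates and $F_\lambda(e) \to +\infty$.

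For $e \to e_*^+$, let $i^*$ be an index with $a_{i^*} = a^*$. Since $G$ is strictly decreasing on $J$ and $G(e_*) = \lambda/a^*$, for $e$ just above $e_*$ we have $a^* G(e) - \lambda \to 0^-$, so the $i^*$-term of the sum tends to $-\infty$. All remaining terms have $a_i \le a^*$ and $a_i G(e_*) - \lambda = a_i\lambda/a^* - \lambda \le 0$; by continuity they stay bounded (negative) near $e_*$. Thus the whole sum diverges to $-\infty$, and consequently $F_\lambda(e) = e - (\text{sum}) \to +\infty$ as $e \to e_*^+$.

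The only mildly subtle point is confirming that the ties (multiple indices with $a_i = a^*$) and the near-ties (indices with $a_i$ close to but less than $a^*$) behave correctly; handling this just amounts to observing that the $i^*$-term alone already goes to $-\infty$ while the other finitely many terms stay bounded near $e_*$, so no cancellation can occur. I do not anticipate any real obstacle beyond bookkeeping on signs.
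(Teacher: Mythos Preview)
Your proposal is correct and follows exactly the approach the paper has in mind; the paper's own proof is simply the one-line remark ``This is immediate from the definition of $F(\lambda,e)$,'' and your write-up just unpacks that immediacy. The only cosmetic wrinkle is your handling of ties: if several indices achieve $a_i = a^*$ then those terms all diverge to $-\infty$ rather than stay bounded, but since every term in the sum is nonpositive on $I_\lambda$ there is no possibility of cancellation and your conclusion stands.
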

\begin{proof}
This is immediate from the definition of $F(\lambda,e)$.
\end{proof}

\begin{lem}
For each $\lambda > \lambda^*$,  $F(\lambda,0) > 0$.
\end{lem}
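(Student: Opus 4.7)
The plan is to evaluate $F(\lambda,0)$ directly from its definition and reduce the claim to a sign check on each summand. By definition,
\begin{align}
F(\lambda,0) = 0 - \sum_{i=1}^{p} \frac{a_i \omega_i}{a_i G(0) - \lambda} = -\sum_{i=1}^{p} \frac{a_i \omega_i}{a_i G(0) - \lambda}.
\end{align}
Since $a_i\omega_i > 0$, showing $F(\lambda,0) > 0$ reduces to showing that every denominator $a_i G(0) - \lambda$ is strictly negative, i.e.\ that $\lambda > a^* G(0)$.

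To obtain this last inequality I would invoke Proposition \ref{prop-range-e}, which guarantees that for $\lambda > \lambda^*$ we have $e(\lambda) \in I_\lambda \cap (-\infty,0)$. In particular, $e(\lambda) < 0$ and $G(e(\lambda)) < \lambda/a^*$. Since $0 > e(\lambda) > -1/(\gamma b^*)$, the point $0$ lies in $J$, and the strict monotonicity of $G$ on $J$ (which was established at the start of Section \ref{sec-math}) yields
\begin{align}
G(0) < G(e(\lambda)) < \frac{\lambda}{a^*}.
\end{align}
Hence $\lambda > a^* G(0) \geq a_i G(0)$ for every $i$, so each denominator $a_i G(0) - \lambda$ is negative, each summand $\frac{a_i\omega_i}{a_i G(0)-\lambda}$ is negative, and therefore $F(\lambda,0) > 0$.

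The argument is essentially a one-line manipulation once Proposition \ref{prop-range-e} is in hand, so there is no real obstacle. The only subtle point worth flagging is the use of strict monotonicity of $G$ to pass from the strict inclusion $e(\lambda) \in I_\lambda$ to the strict inequality $G(0) < \lambda/a^*$; this relies on the fact that $0 \in J$, which is automatic since $-1/(\gamma b^*)$ is negative.
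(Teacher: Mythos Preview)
Your proof is correct and follows essentially the same approach as the paper: evaluate $F(\lambda,0)$ directly, then use $e(\lambda)<0$ together with the monotonicity of $G$ to obtain $G(0) < G(e(\lambda)) < \lambda/a^*$, forcing every denominator $a_i G(0)-\lambda$ to be negative. The only cosmetic difference is that the paper phrases the monotonicity as ``$G$ is decreasing on $I_\lambda$'' (which already contains $0$ since $e(\lambda)<0$ lies in $I_\lambda$), whereas you invoke it on the larger interval $J$; both are equivalent here.
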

\begin{proof}
We have:
\begin{align}
F(\lambda,0) =  - \sum_{i=1}^{p} \frac{a_i \omega_i}{a_i G(0) - \lambda}.
\end{align}
Since $G(e(\lambda)) < \lambda / a_i$ and $G(e)$ is decreasing on $I_\lambda$, and $e(\lambda) < 0$, we also have $G(0) < G(e(\lambda)) < \lambda / a_i$; consequently, $F(\lambda,0) > 0$.
\end{proof}

\begin{prop}
\label{prop-identify}
For $\lambda > \lambda^*$, $(\partial_e F)(\lambda,e(\lambda)) > 0$.
\end{prop}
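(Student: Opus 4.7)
The plan is to read this proposition as an essentially immediate consequence of the identity derived in the proof of the preceding proposition, together with the monotonicity conclusion already in hand. Recall that differentiating the defining relation $F(\lambda, e(\lambda)) = 0$ yields
\begin{align}
e'(\lambda) \, (\partial_e F)(\lambda, e(\lambda))
= \sum_{i=1}^{p} \frac{a_i \omega_i}{(a_i G(e(\lambda)) - \lambda)^2},
\end{align}
and the right-hand side is a manifestly strictly positive quantity. The previous proposition tells us $e'(\lambda) > 0$ for every $\lambda > \lambda^*$, so dividing produces $(\partial_e F)(\lambda, e(\lambda)) > 0$.

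If a more self-contained argument is wanted, I would instead argue via a continuity/sign-locking step, which mirrors the reasoning used for the preceding proposition. By strict convexity of $F_\lambda$ on $I_\lambda$ (established in the previous subsection) and the preceding lemmas, $F_\lambda$ has exactly two roots in $I_\lambda \cap (-\infty, 0)$; at the smaller root $(\partial_e F)(\lambda, \cdot) < 0$ and at the larger root $(\partial_e F)(\lambda, \cdot) > 0$. In particular $(\partial_e F)(\lambda, e(\lambda))$ is nonzero. If it ever vanished at some $\lambda_0 > \lambda^*$, the identity above would force the positive sum on the right to equal zero, a contradiction. Hence $(\partial_e F)(\lambda, e(\lambda))$ has constant sign on $(\lambda^*, \infty)$.

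To fix that sign, I would evaluate the limit as $\lambda \to \infty$. Since $e(\lambda) \to 0^-$ (by the earlier corollary) and $G(e(\lambda))$ remains bounded, the explicit formula
\begin{align}
(\partial_e F)(\lambda, e(\lambda))
= 1 + G'(e(\lambda)) \sum_{i=1}^{p} \left(\frac{a_i}{a_i G(e(\lambda)) - \lambda}\right)^2 \omega_i
\end{align}
has a sum that tends to $0$ as $\lambda \to \infty$, so the limit is $1$. Combined with the sign-locking from the previous paragraph, this gives $(\partial_e F)(\lambda, e(\lambda)) > 0$ for all $\lambda > \lambda^*$.

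I do not anticipate a real obstacle here; the only point requiring a little care is making sure the argument is not circular with respect to the proof of monotonicity of $e(\lambda)$. Using the first route, one simply cites that proposition. Using the second route, the asymptotic evaluation and the sign-locking step are independent of monotonicity, so both orderings are logically consistent.
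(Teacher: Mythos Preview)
Your first route is exactly the paper's proof: it cites the identity $e'(\lambda)(\partial_e F)(\lambda,e(\lambda)) = \sum_i a_i\omega_i/(a_i G(e(\lambda))-\lambda)^2 > 0$ together with $e'(\lambda)>0$ from the preceding proposition. The second, sign-locking argument you offer is a valid self-contained alternative, but the paper does not use it.
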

\begin{proof}
This follows from \eqref{eq4567} and $e'(\lambda) > 0$.
\end{proof}

We have shown that $F_\lambda(e)$ has two roots in the interval $I_\lambda \cap (-\infty,0)$ whenever $\lambda > \lambda^*$. Proposition \ref{prop-identify} identifies $e(\lambda)$ as the root that is closest to zero, or equivalently, the root where the derivative of $F_\lambda$ is positive. This characterization will be used in Section \ref{sec-evaluate} to devise the algorithm for computing $e(\lambda)$, and consequently $s(\lambda)$.

\subsection{The minimum of $F_\lambda(e)$}

%%%We now devise an algorithm for computing the edge of the noise distribution, $\lambda^*$.

%%%For any $\lambda > 0$, we let $J_\lambda$ be the unbounded interval $I_\lambda \cup [0,\infty)$. We consider the function $F_\lambda(e) =  F(\lambda,e)$ on this interval. We still have that $F_\lambda(e)$ is strictly convex, and diverges to $+\infty$ as $e$ approaches the left endpoint of $J_\lambda$ from the right.

We will let $t(\lambda)$ denote the minimum of $F_\lambda(e)$ on $J_\lambda$; that is, $t(\lambda)$ is the unique value on $J_\lambda$ that satisfies
\begin{align}
(\partial_e F)(\lambda, t(\lambda)) = 0.
\end{align}

We define the function $Q(\lambda)$ for $\lambda > 0$ by:
\begin{align}
Q(\lambda) = F(\lambda,t(\lambda)).
\end{align}
For any $\lambda > 0$, we define the function $R_\lambda(e)$ for $e \in I_\lambda$ by:
\begin{align}
R_\lambda(e) = (\partial_e F)(\lambda,e) = F_\lambda'(e).
\end{align}
Note that by definition, $R_\lambda(t(\lambda)) = 0$ for all $\lambda = 0$.

We will show that $Q$ is decreasing and convex and $R_\lambda$ is increasing and concave.

\begin{lem}
\label{lem-qdecreasing}
$Q(\lambda)$ is a decreasing function of $\lambda > 0$.
\end{lem}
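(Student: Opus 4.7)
The plan is to differentiate $Q(\lambda) = F(\lambda, t(\lambda))$ using the chain rule and exploit the defining property of $t(\lambda)$ as a critical point of $F_\lambda$. Specifically, by the chain rule
\begin{align}
Q'(\lambda) = (\partial_\lambda F)(\lambda, t(\lambda)) + (\partial_e F)(\lambda, t(\lambda)) \cdot t'(\lambda),
\end{align}
and since $(\partial_e F)(\lambda, t(\lambda)) = 0$ by definition, the envelope identity
\begin{align}
Q'(\lambda) = (\partial_\lambda F)(\lambda, t(\lambda))
\end{align}
holds. The formula for $\partial_\lambda F$ was already derived in the previous subsection,
\begin{align}
(\partial_\lambda F)(\lambda, e) = -\sum_{i=1}^{p} \frac{a_i \omega_i}{(a_i G(e) - \lambda)^2},
\end{align}
which is strictly negative whenever the sum is well-defined, i.e.\ whenever $a_i G(e) \ne \lambda$ for all $i$. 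This holds for $e = t(\lambda) \in I_\lambda$ because $G(t(\lambda)) < \lambda/a^* \le \lambda/a_i$.

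Before invoking the chain rule one must know that $t(\lambda)$ is differentiable. This is where the main (and only real) technical point lies. I would verify smoothness of $t(\lambda)$ by the implicit function theorem applied to the equation $R_\lambda(t(\lambda)) = (\partial_e F)(\lambda, t(\lambda)) = 0$. The requisite non-degeneracy condition is $(\partial_{ee}^2 F)(\lambda, t(\lambda)) \ne 0$, and this has already been established in the preceding subsection, where it was shown that $(\partial_{ee}^2 F)(\lambda, e) > 0$ throughout $I_\lambda$. The implicit function theorem therefore furnishes a $C^1$ function $t(\lambda)$, and the chain-rule step above is justified.

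Combining these observations yields $Q'(\lambda) < 0$ for all $\lambda > 0$, and hence $Q$ is strictly decreasing. I would keep the write-up to two short paragraphs: one setting up the envelope argument and appealing to the sign of $\partial_\lambda F$ from the previous subsection, and one brief parenthetical justification of the differentiability of $t(\lambda)$ via the implicit function theorem and the already-proven strict convexity of $F_\lambda$.
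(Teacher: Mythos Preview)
Your proof is correct and follows essentially the same envelope-theorem argument as the paper: differentiate $Q(\lambda)=F(\lambda,t(\lambda))$ by the chain rule, use $(\partial_e F)(\lambda,t(\lambda))=0$ to drop the second term, and conclude from the explicit formula $(\partial_\lambda F)(\lambda,e)=-\sum_i a_i\omega_i/(a_iG(e)-\lambda)^2<0$. Your additional paragraph invoking the implicit function theorem to justify differentiability of $t(\lambda)$ is a nice point of rigor that the paper leaves implicit.
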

\begin{proof}
The derivative of $Q$ may be computed as follows:
\begin{align}
Q'(\lambda) &= \partial_\lambda \{ F(\lambda,t(\lambda)) \}
    \nonumber \\
&= (\partial_\lambda F)(\lambda,t(\lambda)) + (\partial_e F)(\lambda,t(\lambda)) t'(\lambda)
    \nonumber \\
&= (\partial_\lambda F)(\lambda,t(\lambda))
    \nonumber \\
&= - \sum_{i=1}^{p} \frac{a_i \omega_i}{(a_i G(t(\lambda)) - \lambda)^2},
\end{align}
which is negative.
\end{proof}

\begin{prop}
\label{prop-qconvex}
$Q(\lambda)$ is a convex function of $\lambda > 0$.
\end{prop}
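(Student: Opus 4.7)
The plan is to derive convexity of $Q$ from the stronger statement that $F(\lambda, e)$ is \emph{jointly} convex in $(\lambda, e)$ on a suitable convex domain, and then invoke the standard partial-minimization principle from convex analysis: if $F$ is jointly convex on a convex set $\Omega$, then $\lambda \mapsto \inf_{e : (\lambda,e) \in \Omega} F(\lambda, e)$ is convex on the projection of $\Omega$ onto the $\lambda$-axis. Since $Q(\lambda) = F(\lambda, t(\lambda))$ is precisely this partial infimum---achieved uniquely at $e = t(\lambda)$ by the strict convexity of $F_\lambda$ already established---convexity of $Q$ will follow at once.

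The natural domain is
\begin{align}
\Omega = \Bigl\{ (\lambda, e) : \lambda > 0,\ e > -\tfrac{1}{\gamma b^*},\ G(e) < \tfrac{\lambda}{a^*} \Bigr\},
\end{align}
which contains every $(\lambda, t(\lambda))$ and whose $\lambda$-projection is all of $(0,\infty)$ (using that $G(e) \to 0$ as $e \to \infty$). Convexity of $\Omega$ is immediate: the first two constraints are halfspaces, and the third rewrites as $\lambda - a^* G(e) > 0$, a superlevel set of a concave function, since $G$ is convex on $J$ by the expression for $G''(e)$ already displayed.

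The heart of the argument is joint convexity of $F$ on $\Omega$. Writing
\begin{align}
F(\lambda, e) = e + \sum_{i=1}^{p} \frac{\omega_i}{v_i}, \qquad v_i = \frac{\lambda}{a_i} - G(e) > 0,
\end{align}
and noting that $e$ is linear and each $\omega_i > 0$, it suffices to verify that each $1/v_i$ is jointly convex. I would do this by direct computation of the $2 \times 2$ Hessian. The diagonal entries
\begin{align}
\partial_{\lambda\lambda}^{2}\bigl(1/v_i\bigr) = \frac{2}{a_i^2 v_i^3}, \qquad \partial_{ee}^{2}\bigl(1/v_i\bigr) = \frac{G''(e)}{v_i^2} + \frac{2 G'(e)^2}{v_i^3}
\end{align}
are both strictly positive (using $v_i > 0$ and $G''(e) > 0$), and the crucial observation is that the contribution $4 G'(e)^2/(a_i^2 v_i^6)$ appearing in the product of the diagonals exactly cancels $\bigl(\partial_{\lambda e}^{2}(1/v_i)\bigr)^2 = 4 G'(e)^2/(a_i^2 v_i^6)$, leaving a determinant of $2 G''(e)/(a_i^2 v_i^5) > 0$. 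Hence the Hessian of $1/v_i$ is positive definite, so $F$ is a conic combination of jointly convex functions plus a linear term, and in particular jointly convex on $\Omega$.

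The principal obstacle---and essentially the only non-routine calculation---is this exact cancellation in the Hessian determinant: a priori both the product of diagonals and the square of the mixed partial contain terms of order $G'(e)^2/v_i^6$, and only after they cancel does the strictly positive contribution driven by $G''(e) > 0$ become visible. Once this is established, the remainder reduces to convexity of $\Omega$ and the partial-minimization principle, both of which are standard.
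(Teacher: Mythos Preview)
Your proof is correct, and it takes a genuinely different route from the paper's.

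The paper proceeds by brute force: it implicitly differentiates $(\partial_e F)(\lambda, t(\lambda)) = 0$ to get an explicit formula for $t'(\lambda)$, substitutes into $Q''(\lambda)$, and then argues term by term that each summand in $Q''(\lambda)$ is positive. The delicate step there is the inequality $a^* G'(t(\lambda))\, t'(\lambda) \le 1$, which after some algebraic massaging reduces to a Cauchy--Schwarz bound $G'(e)^2 \le \tfrac{1}{2} G(e) G''(e)$. Your approach sidesteps all of this by showing that $F$ is \emph{jointly} convex on the convex set $\Omega$ and invoking the partial-minimization principle. The Hessian computation for each $1/v_i$ is shorter and more transparent than the paper's manipulation---the cancellation you isolate is exactly the mechanism that the paper's Cauchy--Schwarz step is compensating for, but seen at the level of a single $2\times 2$ Hessian rather than after implicit differentiation. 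What the paper's approach buys is an explicit formula for $Q''(\lambda)$ (potentially useful elsewhere); what yours buys is a cleaner, more conceptual argument with fewer moving parts and no need to track $t'(\lambda)$ at all.
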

\begin{proof}
We first compute the derivative of $t(\lambda)$. We have 
\begin{align}
0 = (\partial_e F)(\lambda,t(\lambda)).
\end{align}
Differentiating with respect to $\lambda$, we find
\begin{align}
0 = (\partial_{\lambda e}^2 F)(\lambda,t(\lambda))
    + (\partial_{ee}^2 F)(\lambda,t(\lambda)) t'(\lambda),
\end{align}
and so
\begin{align}
t'(\lambda) &= \frac{-(\partial_{\lambda e}^2 F)(\lambda,t(\lambda))}
    {(\partial_{ee}^2 F)(\lambda,t(\lambda))}
    \nonumber \\
&= \frac{ -2 G'(t(\lambda)) 
        \sum_{i=1}^{p} \frac{a_i^2 \omega_i}{(a_i G(t(\lambda)) - \lambda)^3} }
    {G''(t(\lambda)) \sum_{i=1}^{p} \frac{a_i^2 \omega_i}{(a_i G(t(\lambda)) - \lambda)^2}
        - 2 G'(e)^2\sum_{i=1}^{p} \frac{a_i^3 \omega_i}{(a_i G(t(\lambda)) - \lambda)^3}
     }.
\end{align}

Now the second derivative of $Q$ is given by
\begin{align}
Q''(\lambda) = 2\sum_{i=1}^{p} \frac{a_i (a_i G'(t(\lambda)) t'(\lambda) -1) \omega_i}
    {(a_i G(t(\lambda)) - \lambda)^3}.
\end{align}

We will show that $Q''(\lambda) > 0$ for all $\lambda$. In fact, we will show the stronger result that each summand is positive, or equivalently, recalling that $a^* = \max_{1 \le i \le p} a_i$,
\begin{align}
a^* G'(t(\lambda)) t'(\lambda) \le 1.
\end{align}

To prove this, we observe that
\begin{align}
a^* G'(t(\lambda)) t'(\lambda)
&= \frac{ -2 G'(t(\lambda))^2 
        \sum_{i=1}^{p}\frac{a_i^3\omega_i}{(a_i G(t(\lambda)) - \lambda)^3} \frac{a^*}{a_i}}
    {G''(t(\lambda)) \sum_{i=1}^{p} \frac{a_i^2 \omega_i}{(a_i G(t(\lambda)) - \lambda)^2}
        - 2 G'(t(\lambda))^2 \sum_{i=1}^{p} 
        \frac{a_i^3 \omega_i}{(a_i G(t(\lambda)) - \lambda)^3}
    } 
    \nonumber \\
&= \frac{ -2 G'(t(\lambda))^2
        \sum_{i=1}^{p}\frac{a_i^3\omega_i}{(a_i G(t(\lambda)) - \lambda)^3} \frac{a^*}{a_i}}
    {G''(t(\lambda)) \sum_{i=1}^{p} \frac{a_i^3 \omega_i}{(a_i G(t(\lambda)) - \lambda)^3}
        \frac{a_i G(t(\lambda)) - \lambda} {a_i}
        - 2 G'(t(\lambda))^2 \sum_{i=1}^{p} 
        \frac{a_i^3 \omega_i}{(a_i G(t(\lambda)) - \lambda)^3}
    }
    \nonumber \\
&= \frac{ -2 G'(t(\lambda))^2
        \sum_{i=1}^{p} \frac{a_i^3 \omega_i}{(a_i G(t(\lambda)) - \lambda)^3} \frac{a^*}{a_i}}
    {\sum_{i=1}^{p} \frac{a_i^3 \omega_i}{(a_i G(t(\lambda)) - \lambda)^3}
        \left( G''(t(\lambda)) \frac{a_i G(t(\lambda)) - \lambda}{a_i}
            - 2 G'(t(\lambda))^2 \right)
    }
    \nonumber \\
&= \frac{\sum_{i=1}^{p} 
        \frac{a_i^3 \omega_i}{(\lambda - a_i G(t(\lambda)))^3} \frac{a^*}{a_i}}
    {\sum_{i=1}^{p} \frac{a_i^3 \omega_i}{(\lambda - a_i G(t(\lambda)))^3}
        \left( 1 - \frac{G''(t(\lambda))}{2 G'(t(\lambda))^2} 
                \frac{a_i G(t(\lambda)) - \lambda}{a_i} \right).
    }
\end{align}
To show that this is less than 1, it is enough to show that
\begin{align}
\frac{a^*}{a_i} \le 1 - \frac{G''(t(\lambda))}{2 G'(t(\lambda))^2} 
                \frac{a_i G(t(\lambda)) - \lambda}{a_i}
= 1 - \frac{G''(t(\lambda))}{2 G'(t(\lambda))^2} \left(G(t(\lambda)) - \frac{\lambda}{a_i} 
    \right),
\end{align}
or equivalently
\begin{align}
1 \le \frac{a_i}{a^*}
    + \frac{G''(t(\lambda))}{2 G'(t(\lambda))^2} \left(\frac{\lambda}{a^*} 
        - \frac{a_i}{a^*} G(t(\lambda))
    \right),
\end{align}

We will show that this inequality holds for any value of $a_i$ between $0$ and $a^*$. If $a_i = a^*$, then the right side becomes
\begin{align}
1 + \frac{G''(t(\lambda))}{2 G'(t(\lambda))^2} \left(\frac{\lambda}{a^*} 
        - G(t(\lambda))
    \right),
\end{align}
and since the term 
\begin{align}
\frac{G''(t(\lambda))}{2 G'(t(\lambda))^2} \left(\frac{\lambda}{a^*} 
        - G(t(\lambda))
    \right)
\end{align}
is positive (because $G$ is convex, $t(\lambda) \in J_\lambda$, and $a^* G(e) < \lambda$ on $J_\lambda$), the desired inequality is satisfied.

On the other hand, if $a_i = 0$, the right hand side becomes
\begin{align}
\frac{G''(t(\lambda))}{2 G'(t(\lambda))^2} \frac{\lambda}{a^*}
> \frac{G''(t(\lambda)) G(\lambda)}{2 G'(t(\lambda))^2}
\end{align}
where the inequality holds since $t(\lambda) \in J_\lambda$, and $a^* G(e) < \lambda$ for all $e \in J_\lambda$. By the Cauchy-Schwarz inequality,
\begin{align}
|G'(\lambda)| = \gamma\sum_{j=1}^{n} \left( \frac{b_j}{1 + \gamma b_j e} \right)^2  \pi_j
&=\gamma\sum_{j=1}^{n} \left( \frac{b_j}{1 + \gamma b_j e} \right)^{3/2} 
    \left( \frac{b_j}{1 + \gamma b_j e} \right)^{1/2} \pi_j
    \nonumber \\
& \le  \left[\gamma^2\sum_{j=1}^{n} \left( \frac{b_j}{1 + \gamma b_j e}
        \right)^{3}  \pi_j \right]^{1/2} 
    \left[\sum_{j=1}^{n} \frac{b_j}{1 + \gamma b_j e}  \pi_j \right]^{1/2}
    \nonumber \\
&= \sqrt{\frac{G''(e) G(e)}{2}},
\end{align}
or in other words,
\begin{align}
\frac{G''(e) G(e)}{2 G'(e)^2} \ge 1.
\end{align}
This gives the desired result.
\end{proof}

\begin{prop}
For all $\lambda > 0$, $R_\lambda(e)$ is an increasing and concave function of $e \in I_\lambda$.
\end{prop}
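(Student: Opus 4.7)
The plan is to reduce the statement to two sign conditions on derivatives of $F_\lambda$. Since $R_\lambda(e) = F_\lambda'(e)$, the claim that $R_\lambda$ is increasing and concave on $I_\lambda$ is exactly $F_\lambda''(e) > 0$ and $F_\lambda'''(e) < 0$ there. The first is already the content of the convexity lemma for $F_\lambda$ proved above in Section \ref{sec-fbehavior}, so monotonicity of $R_\lambda$ comes for free. All of the work sits in showing $F_\lambda'''(e) < 0$ on $I_\lambda$.

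To carry this out I would introduce the shorthand
\[
T_k(e) = \sum_{i=1}^p \frac{a_i^k \omega_i}{(a_i G(e) - \lambda)^k},
\]
so that the existing formula for $F_\lambda''$ reads $F_\lambda''(e) = G''(e) T_2(e) - 2 G'(e)^2 T_3(e)$. A one-line chain rule step gives the convenient identity $T_k'(e) = -k G'(e) T_{k+1}(e)$. Differentiating $F_\lambda''(e)$ once more and collecting the two $T_3$ contributions (one from $G''(e) T_2'(e)$ and one from the $\partial_e [G'(e)^2]$ factor) produces
\[
F_\lambda'''(e) = G'''(e) T_2(e) - 6 G'(e) G''(e) T_3(e) + 6 G'(e)^3 T_4(e).
\]
The concluding step is a sign check on each of the three summands. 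Directly from the definition of $G$, the derivatives alternate in sign on $J$: $G'(e) < 0$, $G''(e) > 0$, $G'''(e) < 0$, since up to positive prefactors they are the series $\sum_j \pi_j b_j^{r}/(1 + \gamma b_j e)^{r}$ with sign $(-1)^{r-1}$. On $I_\lambda$ the factor $a_i G(e) - \lambda$ is strictly negative for every $i$, so $T_k$ carries the sign of $(-1)^k$: $T_2 > 0$, $T_3 < 0$, $T_4 > 0$. Each of the three products $G'''(e) T_2(e)$, $-6 G'(e) G''(e) T_3(e)$, and $6 G'(e)^3 T_4(e)$ is therefore strictly negative, giving $F_\lambda'''(e) < 0$ and hence the desired concavity.

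The main obstacle here is bookkeeping rather than any genuine analytic difficulty. Unlike in Proposition \ref{prop-qconvex}, where the convexity of $Q$ required a Cauchy--Schwarz bound to dispose of a troublesome cross term, the alternating-sign pattern of $G^{(r)}$ on $J$ and of $T_k$ on $I_\lambda$ causes every term of $F_\lambda'''(e)$ to line up with the same (negative) sign automatically. The only place where attention is needed is in correctly tracking the numerical coefficients when differentiating $G''(e) T_2(e) - 2 G'(e)^2 T_3(e)$, and in particular in combining the two $T_3$ contributions into the single $-6 G'(e) G''(e) T_3(e)$ term.
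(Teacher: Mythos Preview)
Your proposal is correct and follows essentially the same route as the paper: both reduce the claim to $F_\lambda''(e)>0$ (already established) and $F_\lambda'''(e)<0$, compute the third derivative as the three-term expression $G'''(e)T_2(e) - 6G'(e)G''(e)T_3(e) + 6G'(e)^3 T_4(e)$, and verify each summand is negative from the alternating signs of $G^{(r)}$ on $J$ and of $T_k$ on $I_\lambda$. Your $T_k$ shorthand and the identity $T_k'(e)=-kG'(e)T_{k+1}(e)$ are a clean way to organize the bookkeeping, but the argument is otherwise identical.
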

\begin{proof}
The first derivative of $R_\lambda$ is:
\begin{align}
R_\lambda'(e) = (\partial^2_{ee} F)(\lambda,e),
\end{align}
which as we've seen is positive. The second derivative of $R_\lambda$ is
\begin{align}
R_\lambda''(e) =& (\partial_{eee}^3 F)(\lambda,e)
    \nonumber \\
=& G^{(3)}(e)\sum_{i=1}^{p} \left(\frac{a_i}{a_i G(e) - \lambda}\right)^2 \omega_i
    - 6 G''(e) G'(e)\sum_{i=1}^{p} \left(\frac{a_i}{a_i G(e) - \lambda}\right)^3 \omega_i
    \nonumber \\
    &+ 6 G'(e)^3 \sum_{i=1}^{p} \left(\frac{a_i}{a_i G(e) - \lambda}\right)^4 \omega_i,
\end{align}
which is always negative, since $G^{(3)}(e) < 0$, $G''(e) > 0$, $G'(e) < 0$, and $a_i G(e) < \lambda$ for all $e \in I_\lambda$ and $1 \le i \le p$.
\end{proof}

\section{Algorithms}
\label{sec-algorithms}

In this section, we describe the algorithms for computing the SSDT $\lambda^*$ and for evaluating the Stieltjes transform $s(\lambda)$ and its derivative $s'(\lambda)$ at values $\lambda > \lambda^*$. By Lemmas \ref{lem-s2sbar} and \ref{lem-s2sbar-der}, $\s(\lambda)$ and $\s'(\lambda)$ may be easily found from $s(\lambda)$ and $s'(\lambda)$.

\subsection{Computation of the boundary $\lambda^*$}
\label{sec-edge}

In this section, we derive an algorithm for the computation of $\lambda^*$. First, we observe that when $\lambda > \lambda^*$, then as we have shown there are real roots of $F_\lambda(e)$ to the left and to the right of $t(\lambda)$; in particular, $F(\lambda,t(\lambda)) < 0$. On the other hand, if $\lambda < \lambda^*$, the function $F_\lambda(e)$ cannot have a real root, since that would imply that the Stieltjes transform is real inside the support of $\mu$. Consequently, $F(\lambda,t(\lambda)) > 0$. It follows that the SSDT $\lambda^*$ is the value at which $F(\lambda,t(\lambda)) = 0$;  that is, $\lambda^*$, is the unique root of $Q$ on $(0,\infty)$.

From Lemma \ref{lem-qdecreasing} and Proposition \ref{prop-qconvex}, $Q(\lambda)$ is decreasing and convex. With an efficient procedure for evaluating $Q(\lambda)$ and $Q'(\lambda)$, we can therefore use Newton's algorithm to find its root if we initialize the algorithm to the left of the root. In Section \ref{eval-q}, we detail how to evaluate $Q(\lambda)$ and $Q'(\lambda)$. As a preliminary step, we will need to compute the left endpoint of $I_\lambda$; we do this in \ref{sec-left}. The resulting algorithm for evaluating $\lambda^*$ is summarized in Algorithm \ref{alg:lambda_star}.

\begin{algorithm}[ht]
\caption{Computation of the left endpoint of $I_\lambda$.}
\label{alg:Ilambda_boundary}
\begin{algorithmic}[1]
\item {\bf Input:} Precision $\epsilon > 0$; parameter $\lambda > 0$

\item
{\bf Initialize:} $e > -1/(\gamma b^*)$

\item
{\bf Bisection:} $e \leftarrow (e -1/(\gamma b^*) )/2$, until $T_\lambda(e) > 0$

\item
{\bf Newton:} $e \leftarrow e - T_\lambda(e)/T_\lambda'(e)$, until $|T_\lambda(e)| < \epsilon$

\item 
{\bf Output:} $e_\lambda^* = e$

\end{algorithmic}
\end{algorithm}

\begin{algorithm}[ht]
%%%\caption{Evaluation of $t(\lambda)$, $t'(\lambda)$, $Q(\lambda)$ and $Q'(\lambda)$}
\caption{Evaluation of $t(\lambda)$, $Q(\lambda)$ and $Q'(\lambda)$.}
\label{alg:tlambda}
\begin{algorithmic}[1]
\item {\bf Input:} Precision $\epsilon > 0$; parameter $\lambda > 0$

\item
{\bf Endpoint:} Compute $e_\lambda^*$ using Algorithm \ref{alg:Ilambda_boundary}

\item
{\bf Initialize:} $e > e_\lambda^*$

\item
{\bf Bisection:} $e \leftarrow (e + e_\lambda^* )/2$, until $R_\lambda(e) < 0$

\item
{\bf Newton:} $e \leftarrow e - R_\lambda(e)/R_\lambda'(e)$, until $|R_\lambda(e)| < \epsilon$

\item 
{\bf Output:} $t(\lambda) = e$,
%%%and $t'(\lambda)= -(\partial_{\lambda e}^2 F)(\lambda,e)/ (\partial_{ee}^2 F)(\lambda,e)$,
$Q(\lambda) = F(\lambda,e)$, $Q'(\lambda) = (\partial_\lambda F)(\lambda,e)$

\end{algorithmic}
\end{algorithm}

\begin{algorithm}[ht]
\caption{Evaluation of $\lambda^*$.}
\label{alg:lambda_star}
\begin{algorithmic}[1]
\item {\bf Input:} Precision $\epsilon > 0$

\item
{\bf Initialize:} $\lambda > 0$

\item
{\bf Bisection:} $\lambda \leftarrow \lambda/2$, until $Q(\lambda) < 0$

\item
{\bf Newton:} $\lambda \leftarrow \lambda - Q(\lambda)/Q'(\lambda)$, until $|Q(\lambda)| < \epsilon$

\item 
{\bf Output:} $\lambda^* = \lambda$

\end{algorithmic}
\end{algorithm}

\subsubsection{Computation of the left endpoint of $I_\lambda$, $\lambda > 0$}
\label{sec-left}

We recall the definition of the interval $I_\lambda$:
\begin{align}
I_\lambda = \left\{ e \in J : G(e) < \frac{\lambda}{a^*} \right\},
\end{align}
where $J$ is the interval $J = \left\{ e : e > -1/(\gamma b^*) \right\}$. Let us denote by $e_\lambda^*$ the left endpoint of $I_\lambda$. Since $G(e)$ is a decreasing function of $e$ on $J$, $e_\lambda^*$ is the unique root of 
\begin{align}
T_\lambda(e) = G(e) - \frac{\lambda}{a^*}
\end{align}
on $J$. Since $T_\lambda$ is a decreasing, convex function of $e$ on $J$, we may find its root using Newton's algorithm, initialized to the left of the root. Such an initial value $e_0$ can be found by starting with any value $e$ in $J$, and performing bisection with $-1/(\gamma b^*)$, the left endpoint of $J$, until we arrive at a value where $T_\lambda(e_0) > 0$. Newton's algorithm is then performed with initial value $e_0$. We summarize the procedure in Algorithm \ref{alg:Ilambda_boundary}.

\subsubsection{Evaluation of $t(\lambda)$, $Q(\lambda)$, and $Q'(\lambda)$}
\label{eval-q}

Next, we show how to evaluate the functions $t(\lambda)$, $Q(\lambda)$, and $Q'(\lambda)$. $t(\lambda)$ is defined as the root of $R_\lambda(e)$ on $I_\lambda$. Since $R_\lambda(e)$ is an increasing, concave function, we may find its root using the Newton algorithm initialized to the left of the root, i.e.\ the region where $R_\lambda(e) < 0$. Such an initial value may be found by taking a starting point $e$ to the right of $e_\lambda^*$, and performing bisection on $e$ and $e_\lambda^*$ until we arrive at a value $e_0$ with $R_\lambda(e_0) < 0$. We can then perform Newton's algorithm on $R_\lambda$, initialized at $e_0$.  The procedure is summarized in Algorithm \ref{alg:tlambda}.

\subsection{Evaluation of $s(\lambda)$ and $s'(\lambda)$, $\lambda > \lambda^*$}
\label{sec-evaluate}

In this section, we present an algorithm for evaluating the Stieltjes transform $s(\lambda)$ of $\mu$, and its derivative $s'(\lambda)$, when $\lambda > \lambda^*$. This immediately provides a method for evaluating $\s(\lambda)$ and $\s'(\lambda)$, and the $D$-transform $D(\lambda)$ defined in Section \ref{sec-dtransform}.

As we showed in Section \ref{sec-fbehavior}, the function $F_\lambda(e)$ is convex on $I_\lambda$ and has two roots, both of which are negative. The root closest to $0$, i.e.\ the rightmost root, is $e(\lambda)$. Since $F_\lambda(0) > 0$ and $F_\lambda(e)$ is convex, this tells us that Newton's method, initialized at $e_0=0$, will converge to $e(\lambda)$. For brevity, we introduce the function $W(\lambda,e)$ defined by:
\begin{align}
W(\lambda,e) = \sum_{i=1}^{p} \frac{\omega_i}{a_i G(e) - \lambda}.
\end{align}
With this notation, we have:
\begin{align}
s(\lambda) = W(\lambda,e(\lambda))
\end{align}
and 
\begin{align}
s'(\lambda) = (\partial_\lambda W)(\lambda,e(\lambda)) 
    + (\partial_e W)(\lambda,e(\lambda)) e'(\lambda).
\end{align}

Note that from \eqref{eq12345}, we can evaluate $e'(\lambda)$:
\begin{align}
e'(\lambda) = \frac{-(\partial_\lambda F)(\lambda,e(\lambda))}
    {(\partial_e F)(\lambda,e(\lambda))}.
\end{align}

The method for evaluating $s(\lambda)$ and $s'(\lambda)$ is summarized in Algorithm \ref{alg:stieltjes}. 

\begin{algorithm}[h]
\caption{Evaluation of $s(\lambda)$ and $s'(\lambda)$.}
\label{alg:stieltjes}
\begin{algorithmic}[1]
\item {\bf Input:} Precision $\epsilon > 0$; parameter $\lambda > \lambda^*$

\item
{\bf Initialize:} $e = 0$

\item
{\bf Newton:} $e \leftarrow e - F_\lambda(e)/F_\lambda'(e)$, until $|F_\lambda(e)| < \epsilon$

\item 
{\bf Output:} $s(\lambda) = W(\lambda,e)$, 
    $s'(\lambda) =  (\partial_\lambda W)(\lambda,e) 
    -(\partial_e W)(\lambda,e)(\partial_\lambda F)(\lambda,e) / 
        (\partial_e F)(\lambda,e)$

\end{algorithmic}
\end{algorithm}

\section{Numerical results}
\label{sec-numerical}

In this section, we report on several experiments illustrating the behavior of the algorithms from Section \ref{sec-algorithms}. For the timings reported in Sections \ref{sec:scalability} and \ref{sec:compare}, we used an implementation written in MATLAB 2019b and run on a Dell Precision 5540 with 62.5 GB of RAM and an Intel Core i9 CPU. The MATLAB code is available at the following URL: \texttt{https://github.com/wleeb/MPBoundary}.

\subsection{Convergence}

Algorithms \ref{alg:Ilambda_boundary} -- \ref{alg:stieltjes} are all versions of the Newton root-finding method. In this section, we illustrate the quadratic convergence of these methods, as predicted from the theory reviewed in Section \ref{sec-newton}. In each experiment, we used parameters $p=512$, $n=1024$, and $\gamma = 1/2$. We generated the values $a_1,\dots,a_p$ and $b_1,\dots,b_n$ randomly from a $\text{Unif}(0,1)$ distribution, and assigned random probabilities $\omega_i$ and $\pi_j$ by drawing values from $\text{Unif}(0,1)$ and normalizing to sum to 1. For experiments in which a value $\lambda$ is specified, we also choose it at random.

In Tables \ref{table:Ilambda_boundary} -- \ref{table:stieltjes}, the first column displays the iteration number $m$, starting from the initial value and going until the root has been reached. The second column displays the function value at the $m^{th}$ iterate; the algorithm terminates when the function reaches machine precision $\epsilon$. We work in double precision, so $\epsilon \approx 10^{-16}$. The third column displays the relative error in the root itself, defined by:
\begin{align}
\text{err}(x_m,x) = \frac{x_m - x}{x}.
\end{align}

Table \ref{table:Ilambda_boundary} shows the results for Algorithm \ref{alg:Ilambda_boundary}, which computes the left endpoint $e_\lambda^*$ of $I_\lambda$. Table \ref{table:tlambda} shows the results for Algorithm \ref{alg:tlambda}, which evaluates $t(\lambda)$. Table \ref{table:lambda_star} shows the results for Algorithm \ref{alg:lambda_star}, which computes the boundary $\lambda^*$. Table \ref{table:stieltjes} shows the results for Algorithm \ref{alg:stieltjes}, which evaluates $e(\lambda)$.

\begin{rmk}
For each algorithm, we observe quadratic convergence close to the root, as expected. That is, on each iteration close to termination the number of correct digits roughly doubles, and the size of the objective roughly squares, until machine precision is reached.
\end{rmk}

\subsection{Scalability}
\label{sec:scalability}

In the next experiment, we compute timings for the computation of $\lambda^*$ and the evaluation of $s(\lambda)$. For increasing values of $n$, we set $p = n/2$ ($\gamma=1/2$). We generated the values $a_1,\dots,a_p$ and $b_1,\dots,b_n$ randomly from a $\text{Unif}(1,2)$ distribution, and assigned them random probabilities $\omega_i$ and $\pi_j$ by drawing values from $\text{Unif}(0,1)$ and normalizing to sum to 1. 

For each $n$, we then record the time in seconds required to compute $\lambda^*$, and the time in seconds required to compute $s(\lambda)$ on a grid of 100 equispaced values of $\lambda$ between $\lambda^* + 1$ and $\lambda^* + 10$. The reported timings are averaged over five runs of the experiment, and are displayed in Table \ref{table:timing}. It is apparent that the running times scale approximately linearly with $n$, as we would expect. In addition to linearly scaling with $n$, the magnitudes of the timings are quite encouraging; for example, it takes only about 4 seconds to compute $\lambda^*$ when $n$ is over two million and $p$ is over one million.

%%%We note too that we have made only minimal efforts to optimize the code in its current implementation; for example, we have not made any attempts to determine a good initialization for Newton's method.

%

\subsection{Finite sample accuracy for $\lambda^*$}

The master equations \eqref{eq-e2s} -- \eqref{eq-gdef} from which we compute $\lambda^*$ are asymptotic and deterministic, holding almost surely in the limit as $k,l \to \infty$. In this experiment, we assess the finite sample accuracy of estimating the SSDT $\lambda^*$ from the operator norm of the random matrix $\N \N^T$. We consider how the estimate improves as $k$ and $l$ grow. We use a model where $p = n = 2$, and both $\nu$ and $\unu$ have point masses at $2$ and $3$, each with weight $1/2$; and set $\gamma = 1/2$. We generate matrices of size $k$-by-$l$, where $k$ grows and $l = k / \gamma = 2k$.

For each value of $k$, we draw such a $k$-by-$l$ random matrix $\N$ and compute the operator norm of $\N \N^T$. We compare this to the value of $\lambda^*$ computed using Algorithm \ref{alg:lambda_star}. In Table \ref{table:accuracy_edge}, we show the mean absolute error, averaged over $M=40000$ runs for each value of $k$. More precisely, if $\hat \lambda_j^*$ is the operator norm of $\N \N^T$ from trial $j=1,\dots,M$, we record
\begin{align}
\text{mean absolute error} 
= \frac{1}{M} \sum_{j=1}^{M} \frac{| \lambda^* - \hat\lambda_j^*|}{\lambda^*}.
\end{align}
We also record the average bias, defined as
\begin{align}
\text{mean bias} 
= \frac{1}{M} \sum_{j=1}^{M} \frac{ \lambda^* - \hat \lambda_j^*}{\lambda^*}.
\end{align}

In Figure \ref{fig-edge}, we plot the log error against $\log_2(k)$. The plot demonstrates a linear dependence. The slope of the line is estimated to be approximately $-0.68$.

\begin{rmk}
In \cite{johnstone2001distribution}, it is shown that for white noise the expected fluctations of the top eigenvalue of $\N \N^T$ are of the order $k^{-2/3}$, which implies the log-log plot would have a slope of $-2/3$. The observed slope of approximately $-0.68$ is a close match to this value.
\end{rmk}

\begin{rmk}
For all values of $k$, the bias is positive. In other words, the estimated values $\hat \lambda_k^*$ tend to underestimate $\lambda^*$.
\end{rmk}

\subsection{Finite sample accuracy of spiked model parameters}

In this experiment, we test the finite sample accuracy of parameter estimation in the spiked random matrix model. We consider $k$-by-$l$ random matrices of the form $\Y = \X + \N$, where $\X = \theta \u \v^T$ is a rank $1$ ``signal'' matrix with uniformly random singular vectors $\u$ and $\v$, and $\N = \A^{1/2} \G \B^{1/2}$ is a random Gaussian noise matrix with separable variance profile. We will denote by $\hat \lambda$

As we reviewed in Section \ref{sec-dtransform}, the top eigenvalue of $\Y \Y^T$ converges almost surely to a value $\lambda$ satisfying $\theta^2 = 1/D(\lambda)$, in the limit $k/l \to \gamma$. Furthermore, if $\hat \u$ and $\hat \v$ are the top singular vectors of $\Y$, then the absolute inner products $|\langle \u , \hat \u \rangle|$ and $|\langle \v , \hat \v \rangle|$ converge almost surely to $c \equiv s(\lambda_m) D(\lambda_m) / D'(\lambda_m)$ and $ \underline{c} \equiv \s(\lambda_m) D(\lambda_m) / D'(\lambda_m)$, respectively. We remind the reader that we define the $D$-transform by $D(\lambda) = \lambda s(\lambda) \s(\lambda)$.

We test the accuracy of these formulas for finite and increasing values of $k$ and $l$. Using Algorithm \ref{alg:stieltjes} for evaluating $s(\lambda)$ and $s'(\lambda)$, we can easily evaluate $D(\lambda)$.  Proposition \ref{prop-dtransform} shows that for a specified parameter $\theta$, Newton's root-finding algorithm may be used to solve for the asymptotic $\lambda = 1 / D^{-1}(\theta^2)$. The asymptotic values $c$ and $\underline{c}$ are then evaluated from their respective formulas.

We compare these asymptotic values to the top eigenvalue of $\N \N^T$ and the cosines between the singular vectors of $\X$ and $\Y$, for randomly generated data. We again use a model where $p = n = 2$, and both $\nu$ and $\unu$ have point masses at $2$ and $3$, each with weight $1/2$; and set $\gamma = 1/2$. We generate matrices of size $k$-by-$l$, where $k$ grows and $l = k / \gamma = 2k$. We generate a signal matrix $\X = \theta \u \v^T$ where $\u$ and $\v$ are uniformly random unit vectors in $\R^k$ and $\R^l$, respectively; and $\theta = \sqrt{1 / D(\lambda^*) + 20}$, ensuring a detectable signal.

For each value of $k$, we draw such a $k$-by-$l$ random matrix $\Y = \X + \N$ and compute the operator norm of $\Y \Y^T$. We compare this to the asymptotic value of $\lambda$. We also compare the values $|\langle \u ,\hat \u \rangle|$ and $|\langle \v , \hat \v \rangle|$ to $c$ and $\underline{c}$, respectively. In Table \ref{table:accuracy_spike}, we show the mean absolute errors of these estimates, averaged over $M=40000$ runs for each value of $k$. More precisely, if $\hat \lambda_j$ is the operator norm of $\Y \Y^T$ from trial $j=1,\dots,M$, and $\hat c_j$ and $\hat {\underline c}_j$ are the left and right cosines, respectively, we record the mean absolute errors:
\begin{align}
\frac{1}{M} \sum_{j=1}^{M} \frac{| \lambda - \hat\lambda_j|}{\lambda},
\quad
\frac{1}{M} \sum_{j=1}^{M} \frac{| c - \hat c_j|}{c},
\quad
\frac{1}{M} \sum_{j=1}^{M} \frac{| \underline{c} - \hat {\underline c}_j|}{\underline{c}}.
\end{align}

In Figure \ref{fig-spike}, we plot the log errors against $\log_2(k)$. The plots all demonstrate a linear dependence. The slope of each line is estimated to be approximately $-0.50$.

\begin{rmk}
In \cite{benaych2012singular}, it is shown that the expected fluctations of the top eigenvalue of $\Y \Y^T$ are of the order $k^{-1/2}$, which implies the log-log plot would have a slope of $-1/2$. In \cite{bao2018singular}, it is shown that in the case of white noise the expected fluctations of the cosines are also of order $k^{-1/2}$, also leading to a slope of $-1/2$. Our observed slopes closely match these values.
\end{rmk}

\subsection{One-sided weights}
\label{sec:compare}

While Algorithm \ref{alg:lambda_star} computes the SSDT for an arbitrary separable variance profile, a special case of this problem that arises in certain applications is when $\B = \I_n$; that is, the variance profile has one-sided weights. In \cite{dobriban2017deterministic}, it is shown that the SSDT may be evaluated by finding the minimizer $v^*$ of the function 
\begin{align}
z(v) = \frac{-1}{v} + \gamma\sum_{i=1}^{p} \frac{a_i \omega_i}{1 + a_i v}
\end{align}
on the interval $(-1/a^*,0)$, and then setting $\lambda_* = z(v^*)$. The minimizer $v^*$ is the unique root of the monotonic function
\begin{align}
\label{eq-zprime}
z'(v) = \frac{1}{v^2} - \gamma \sum_{i=1}^{p} \frac{a_i^2 \omega_i}{(1 + a_i v)^2}
\end{align}
on the interval $(-1/a^*,0)$. 

When viable, this approach has obvious advantages over Algorithm \ref{alg:lambda_star}, namely that it finds the root of a function which can be evaluated in closed form rather than by nested applications of Newton's method. Properly applied, it should be substantially faster than Algorithm \ref{alg:lambda_star}. However, the paper \cite{dobriban2017deterministic} does not analyze the behavior of the function \eqref{eq-zprime}, beyond observing that it is monotonic.

We compare the method from \cite{dobriban2017deterministic} for evaluating $\lambda_*$ based on the minimizer of $z(v)$ to Algorithm \ref{alg:lambda_star}. To find the root of $z'(v)$, we employ bisection until the error is approximately square root of machine epsilon, and then use Newton's method to achieve full accuracy. In our experiment, we set $\gamma = 1/2$, and  for each value of $p$ we generate the $a_i$'s uniformly randomly on $[0,1]$ and take uniform weights $\omega_i = 1/p$. For each value of $p$, we solve the problem using each method for $10$ runs, and average the timings of all the runs; the timings are presented in Table \ref{table:compare}.

The results of the experiment are extremely encouraging. They suggest that for one-sided weights, working with the function \eqref{eq-zprime} yields a substantially faster computation of the SSDT than the general master equations  \eqref{eq-e2s} -- \eqref{eq-gdef}. However, a more detailed analysis of the function \eqref{eq-zprime} must be carried out to justify such an approach and show that a fast algorithm is viable for general $a_i$ and $\omega_i$; this is beyond the scope of the current work.

\section{Conclusion}

We have introduced an algorithm for rapidly computing the spectral signal detection threshold $\lambda^* $ in signal-plus-noise random matrix models. We have considered a class of random noise matrices with separable variance structure, which arise often in applications. Our algorithm is based on an implicit characterization of $\lambda^*$ derived from the master equations for the Stieltjes transform $s(\lambda)$. Several nested applications of Newton's method are applied to evaluate $\lambda^*$ and auxiliary parameters. Fast algorithms are also introduced to evaluate $s(\lambda)$ and $s'(\lambda)$, the Stieltjes transform and its derivative, at real values $\lambda > \lambda^*$. We have demonstrated the rapid convergence of these methods and their linear scaling in numerical tests. We have also shown the effects of random fluctations for increasing values of $p$ and $n$.

\section*{Acknowledgements}

I acknowledge support from NSF BIGDATA award IIS 1837992 and BSF award 2018230. I thank Edgar Dobriban for helpful discussions and for pointing out the method from \cite{dobriban2017deterministic}. I also thank the reviewers for their helpful comments on the manuscript.

\bibliographystyle{plain}
\bibliography{refs_mp}

\newpage

%
%    found in experiments/expt_leftend/table77.tex
%
\begin{table} 
\centering
\caption{Error after each iterate in evaluation of $e_\lambda^*$.}
\label{table:Ilambda_boundary}
\begin{tabular}{|c| c  | c |}  
\hline  
 $m$ &  $T_\lambda(e_m)$ & err($e_m,e_\lambda^*$) \\ 
\hline  
1 & 2.08e-01 & 4.26e-02  \\ 
2 & 4.34e-02 & 1.11e-02  \\ 
3 & 2.55e-03 & 6.90e-04  \\ 
4 & 9.66e-06 & 2.63e-06  \\ 
5 & 1.40e-10 & 3.80e-11  \\ 
6 & -2.75e-16 & -1.23e-16  \\ 
\hline  
\end{tabular}  
\end{table}

%
%    experiments/expt_fmin/table77.tex
%
\begin{table}
\centering
\caption{Error after each iterate in evaluation of $t(\lambda)$.}
\label{table:tlambda}
\begin{tabular}{|l| c  | c |}  
\hline  
$m$& $R_\lambda(e_m)$ & err($e_m,t(\lambda)$) \\ 
\hline  
1 & -3.77e-01 & 3.71e-02  \\ 
2 & -6.89e-02 & 8.34e-03  \\ 
3 & -3.40e-03 & 4.34e-04  \\ 
4 & -9.24e-06 & 1.18e-06  \\ 
5 & -6.84e-11 & 8.73e-12  \\ 
6 & -5.55e-15 & 7.85e-16  \\ 
\hline  
\end{tabular}  
\end{table}

%
%    experiments/expt_bdry/table77.tex
%
\begin{table} 
\centering
\caption{Error after each iterate in evaluation of $\lambda^*$.}
\label{table:lambda_star}
\begin{tabular}{|l| c  | c |}  
\hline  
$m$ &  $Q(\lambda_m)$ & err($\lambda_m,\lambda^*$) \\ 
\hline  
1 & 1.25e+00 & -3.10e-01  \\ 
2 & 3.10e-01 & -1.03e-01  \\ 
3 & 3.12e-02 & -1.15e-02  \\ 
4 & 3.92e-04 & -1.46e-04  \\ 
5 & 6.34e-08 & -2.37e-08  \\ 
6 & 2.00e-15 & -2.11e-16  \\ 
\hline  
\end{tabular}  
\end{table}

%
%    experiments/expt_stiel/table77.tex
%
\begin{table} 
\centering
\caption{Error after each iterate in evaluation of $e(\lambda)$.}
\label{table:stieltjes}
\begin{tabular}{|l| c  | c |}  
\hline  
 $m$ &  $F_\lambda(e_m$) & err($e_m,e(\lambda)$) \\ 
\hline  
1 & 4.71e-01 & -1.00e+00  \\ 
2 & 8.86e-03 & -1.93e-02  \\ 
3 & 6.42e-06 & -1.40e-05  \\ 
4 & 3.43e-12 & -7.50e-12  \\ 
5 & 4.44e-16 & -1.10e-15  \\ 
\hline  
\end{tabular}  
\end{table}

%
%    experiments/expt_time2/table77.tex
%
\begin{table}
\centering
\caption{Timings in seconds for evaluating $\lambda^*$ and $s(\lambda)$ at 100 values of $\lambda$.}
\label{table:timing}
\begin{tabular}{|c| c  | c |}  
\hline  
 $\log_2(n)$ &  Timing, $\lambda_*$ & Timing, $s(\lambda)$ \\ 
\hline  
   18 & 4.30e-01 & 1.93e-01  \\ 
   19 & 8.67e-01 & 4.27e-01  \\ 
   20 & 1.89e+00 & 1.66e+00  \\ 
   21 & 3.95e+00 & 4.20e+00  \\ 
   22 & 8.41e+00 & 1.02e+01  \\ 
   23 & 1.70e+01 & 2.09e+01  \\ 
   24 & 3.43e+01 & 4.19e+01  \\ 
\hline  
\end{tabular}  
\end{table}

%

%
%    experiments/expt_accuracy_edge/table77.tex
%
\begin{table} 
\centering
\caption{Errors and bias in estimating $\lambda^*$.}
\label{table:accuracy_edge}
\begin{tabular}{|c| c  | c |}  
\hline  
 $\log_2(k)$ & Error & Bias \\ 
\hline  
    5 & 8.73e-02 & 7.72e-02   \\ 
    6 & 5.41e-02 & 4.75e-02   \\ 
    7 & 3.39e-02 & 2.93e-02   \\ 
    8 & 2.12e-02 & 1.82e-02   \\ 
    9 & 1.32e-02 & 1.13e-02   \\ 
   10 & 8.20e-03 & 6.97e-03   \\ 
   11 & 5.12e-03 & 4.32e-03   \\ 
\hline  
\end{tabular}  
\end{table}  
%

%

%
%    experiments/expt_accuracy_spike/table77.tex
%
\begin{table} 
\centering
\caption{Errors in estimating $\lambda$, $c$, and $\underline{c}$.}
\label{table:accuracy_spike}
\begin{tabular}{|c| c  | c | c|}  
\hline  
 $\log_2(k)$ &  sing.\ val. & left cos. & right cos.  \\ 
\hline  
    5 & 7.68e-02 & 1.80e-02 & 2.26e-02 \\ 
    6 & 5.43e-02 & 1.27e-02 & 1.59e-02 \\ 
    7 & 3.85e-02 & 9.10e-03 & 1.13e-02 \\ 
    8 & 2.74e-02 & 6.38e-03 & 7.98e-03 \\ 
    9 & 1.92e-02 & 4.53e-03 & 5.61e-03 \\ 
   10 & 1.36e-02 & 3.20e-03 & 3.99e-03 \\ 
   11 & 9.58e-03 & 2.25e-03 & 2.81e-03 \\ 
\hline  
\end{tabular}  
\end{table}  
%

%
%    experiments/expt_compare/table77.tex
%
\begin{table}
\centering
\caption{Timings in seconds of Algorithm \ref{alg:lambda_star} and the solution via minimizing $z(v)$.}
\label{table:compare}
\begin{tabular}{|c| c  | c | }  
\hline  
 $\log_2(p)$ &  Alg.\ \ref{alg:lambda_star} & Min.\ $z(v)$  \\ 
\hline  
    18 & 2.67e-01 & 2.38e-02   \\ 
    19 & 5.29e-01 & 4.63e-02   \\ 
    20 & 1.14e+00 & 1.30e-01   \\ 
    21 & 2.43e+00 & 3.20e-01   \\ 
    22 & 5.37e+00 & 6.94e-01   \\ 
    23 & 1.09e+01 & 1.38e+00   \\ 
    24 & 2.18e+01 & 2.73e+00   \\ 
\hline  
\end{tabular}  
\end{table}

\newpage

%
%    experiments/expt_accuracy_edge
%
\begin{figure}[t]
\center
\includegraphics[scale=.5]{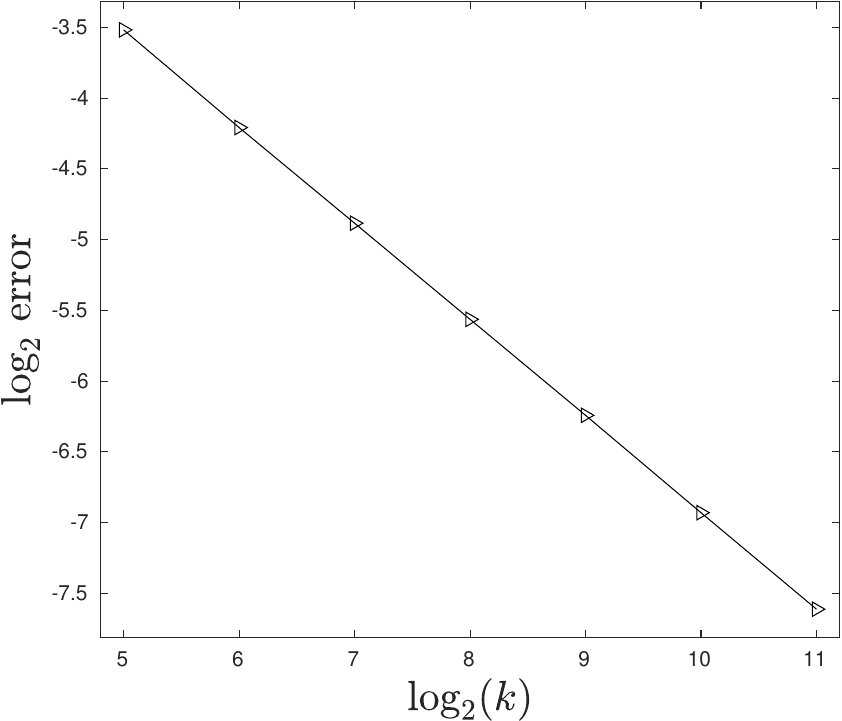}
\caption{Log errors for estimating $\lambda^*$.}
\label{fig-edge}
\end{figure}

%
%    experiments/expt_accuracy_spike
%
\begin{figure}[h]
\center
\includegraphics[scale=.5]{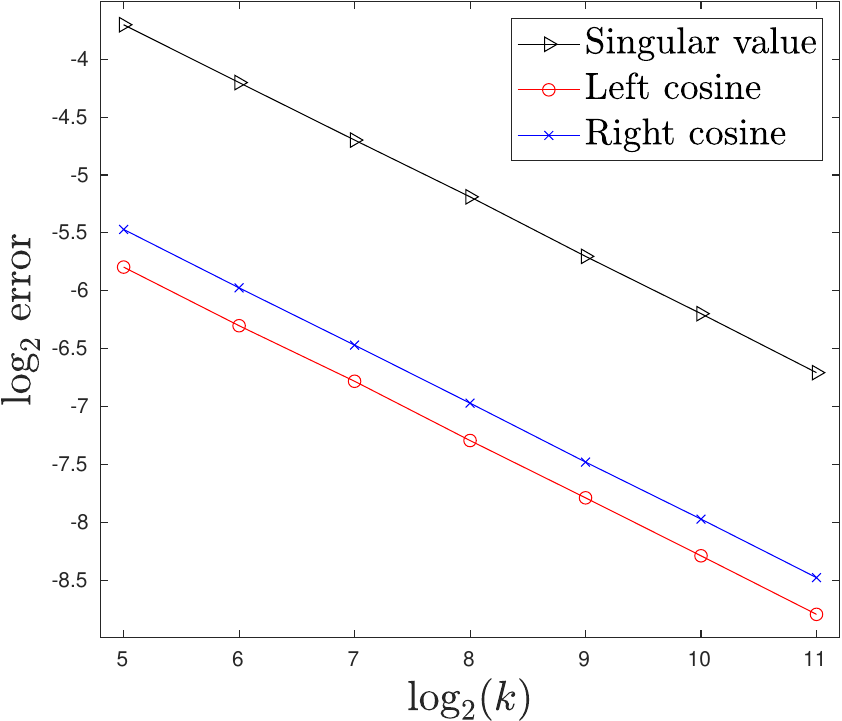}
\caption{Log errors for estimating $\lambda$, $c$, and $\underline{c}$.}
\label{fig-spike}
\end{figure}

\end{document}